\newtheorem{lem}{Lemma}
\newtheorem{rem}{Remark}
\newtheorem{prop}{Proposition}
\def\mb{\mathbf}
\def\mc{\mathcal}
\newcommand\teL{\mathrel{=\!\!\mathop:}}
\begin{document}
\title{Distributed Detection and Mitigation of Biasing Attacks over Multi-Agent Networks}

\author{Mohammadreza~Doostmohammadian,~\IEEEmembership{Member,~IEEE,} Houman~Zarrabi,~\IEEEmembership{Member,~IEEE,} Hamid~R.~Rabiee,~\IEEEmembership{Senior~Member,~IEEE,}  Usman~A.~Khan,~\IEEEmembership{Senior~Member,~IEEE,} and Themistoklis~Charalambous,~\IEEEmembership{Senior~Member,~IEEE}
	
\IEEEcompsocitemizethanks{\IEEEcompsocthanksitem M.~Doostmohammadian is with
	Faculty of Mechanical Engineering, Semnan University, Semnan, Iran, and School of Electrical Engineering, Aalto University, Espoo, Finland. E-mail: mohammadreza.doostmohammadian@aalto.fi,~doost@semnan.ac.ir
	\IEEEcompsocthanksitem H. Zarrabi is with Iran Telecommunication Research Center (ITRC), Tehran,
	Iran.	E-mail: h.zarrabi@itrc.ac.ir
	\IEEEcompsocthanksitem H. R. Rabiee is with School of Computer Engineering, Sharif University of Technology, Tehran, Iran.	E-mail: rabiee@sharif.edu
	\IEEEcompsocthanksitem U. A. Khan is with Electrical and Computer Engineering Department, Tufts University, Medford, MA, USA.	E-mail: khan@ece.tufts.edu
	\IEEEcompsocthanksitem T. Charalambous is with the School of Electrical Engineering at Aalto University,
	Espoo, Finland. E-mail: themistoklis.charalambous@aalto.fi.
	}
	\thanks{The work of UAK was supported in part by NSF under awards~CMMI-1903972 and~CBET-1935555.}
}

\markboth{IEEE Transaction on Network Science and Engineering}%
{Doostmohammadian \MakeLowercase{\textit{et al.}}: Distributed Detection and Mitigation of Biasing Attacks over Multi-Agent Networks}

\IEEEtitleabstractindextext{
\begin{abstract}
This paper proposes a distributed attack detection and mitigation technique based on distributed estimation over a multi-agent network, where the agents take partial system measurements susceptible to (possible) biasing attacks. In particular, we assume that the system is not locally observable via the measurements in the direct neighborhood of any agent. First, for performance analysis in the attack-free case, 
we show that the proposed distributed estimation is unbiased with bounded mean-square deviation in steady-state. Then, we propose a residual-based strategy to locally detect possible attacks at agents. In contrast to the deterministic thresholds in the literature assuming an upper bound on the noise support, we define the thresholds on the residuals in a probabilistic sense.  
After detecting and isolating the attacked agent, a system-digraph-based mitigation strategy is proposed to replace the attacked measurement with a new observationally-equivalent one to recover potential observability loss. We adopt a graph-theoretic method to classify the agents based on their measurements, to distinguish between the agents recovering the system rank-deficiency and the ones recovering output-connectivity of the system digraph. The attack detection/mitigation strategy is specifically described for each type, which is  of polynomial-order complexity for large-scale applications. Illustrative simulations support our theoretical results.
\end{abstract}
\begin{IEEEkeywords}
Biasing Attacks, False-Data Injection, Distributed Observability, Distributed Estimation, Structural Analysis
\end{IEEEkeywords}}
\maketitle
\IEEEdisplaynontitleabstractindextext
\IEEEpeerreviewmaketitle

\IEEEraisesectionheading{\section{Introduction}\label{sec_intro}}
\IEEEPARstart{D}{ata} 
(or measurements) regarding many real-world  systems, such as wireless sensor networks, multi-agent robotic systems, block-chain and cloud-computing, smart energy networks, are naturally distributed over large geographical regions~\cite{asefi2021application,camsap11}. Collecting all these to a central coordinator (or a fusion center) for the purposes of processing and learning is tedious and impractical in many applications. Distributed learning or inference is thus typically preferred, due to the fact that it does not require long-range communication to a central unit. The corresponding distributed strategies are practically feasible as they rely on local data processing and local communication only among the neighboring agents. However, such decentralized strategies are vulnerable to malicious attacks. In this paper, we consider distributed  detection and mitigation of biasing attacks at sensors/agents
performing distributed estimation over a large-scale dynamical system. Potential applications include secure distributed estimation over Cyber-Physical-Systems (CPS) \cite{yang2020distributed,acc13,isj2020,xu2016distributed,pasqualetti2013attack,shiri2018distributed,asilomar14}, Internet-of-Things (IoT) \cite{chen2018internet,csl2020,pandey2020fault}, smart cities \cite{guo2020unsupervised}, social networks \cite{pequito_gsip,SNAM20,icas21_attack}, and power-grid monitoring systems \cite{dehghani2020deep,cui2012coordinated,rawat2015detection,camsap11,drayer2019detection,chakravorti2017detection,luo2019detection,babu2016optimal,liu2011false,6712169} among others. 

\textcolor{blue}{In distributed estimation (or filtering) applications \cite{usman_tsp:07,cattivelli2008diffusion,flock} a \textit{multi-agent network} is referred to a group of agents with sensing, data-processing, and communication capabilities, which take (noisy) output or measurements of the dynamical system, share their information over a network, and process the received data locally to track the system state. In case of  erroneous or biased data \cite{deghat2019detection,milovsevivc2017analysis}, the distributed estimation performance is significantly degraded if the biased measurements are necessary for \textit{observability}. Recall that observability refers to the possibility of inferring the (entire) states of the dynamical system via tracking outputs/measurements of a \textit{subset} of states over a finite time. This is more challenging in \textit{single time-scale} estimation with only one step of data-fusion between every two consecutive time-steps of system dynamics, and with no local observability (i.e., the system is not observable in the neighborhood of any agent)  \cite{usman_tsp:07,cattivelli2008diffusion,chen2016dynamic,chen2018resilient,battistelli_cdc,sayedtu12,nuno-suff.ness}. This differs from double time-scale estimation where all necessary information for observability is \textit{directly} communicated to every agent from its neighbors. This requires considerably more communication traffic and information exchange over the network. This implies that the biased (attacked) measurement affects the \textit{residual} (defined as the deviation of the estimated/expected output from the original system output \cite{giraldo2018survey}) at more agents, making it harder to locally \textit{isolate} the faulty sensor.} 
Such additive bias could be, for example, due to false-data injection attacks \cite{guan2017distributed}.
\textcolor{blue}{The general idea in this work is to \textit{locally} detect and isolate such attacks and, further, reconfigure the multi-agent network using substitute measurements to recover (potential) loss of observability.}

\textcolor{blue}{The distributed estimator in this paper performs consensus (on the received data) at the same time-scale of the underlying system (single time-scale), see e.g.,~\cite{khan_cdc:2010,jstsp} for details. We use structured systems theory~\cite{woude:03,jstsp,jstsp14}) to guarantee \textit{generic} or \textit{structural} observability. This helps to partition the system outputs (fed to the agents) into certain observationally-equivalent classes~\cite{icassp2016}. This gives the set of  necessary agents for estimation (whose removal makes the system unobservable) and the set of redundant agents (whose removal results in no observability loss). Subsequently, different strategies are used to substitute the faulty sensor and design inter-agent communications. We propose our attack detection and mitigation strategy based on this specific \textit{agent classification}. In particular, we show that isolation of the attacks related to system rank-deficiency is more challenging and requires certain constrained gain design.  Recall that system rank refers to the rank of the associated matrix to the linear system of differential equations (in the state-space representation), see Section~\ref{sec_system} for more details.}

\textit{Comparison with related literature:}
\textcolor{blue}{
this work develops a \textit{joint} distributed estimation and attack detection/isolation technique, and extends the prior works on resilient distributed estimation subject to \textit{unreliable} sensor measurements~\cite{pereira2013diffusion,spl17} and adversarial attacks~\cite{dutta2019resilient,mitra2019resilient,mustafa2019secure,wen2018distributed,yang2020adversary,su2019finite,li2017sampled,wang2014stochastically,chen2018resilient}. These literature do not detect/isolate the attack, but estimate the system in the presence of (specific) attacks with bounded (steady-state) error, while making simplifying assumptions, e.g., a \textit{noise-free} model. Our work extends \cite{pereira2013diffusion,spl17,dutta2019resilient,mitra2019resilient,mustafa2019secure,wen2018distributed,yang2020adversary,su2019finite,li2017sampled,wang2014stochastically,chen2018resilient} by further considering \textit{distributed/localized} techniques to locate the attacked sensor. 
Further, this work differs from many works on distributed estimation in the literature by relaxing  the observability assumption; for example, \cite{usman_tsp:07,cattivelli2008diffusion,chen2016dynamic,chen2018resilient,battistelli_cdc,sayedtu12,nuno-suff.ness} assume local observability at some (or all) agents.  In contrast, and similar to \cite{he2019secure,he2020secure,flock}, our work makes no such restrictive assumption. However, \cite{he2019secure,he2020secure,flock} perform many iterations of data-fusion (consensus) between two consecutive system steps (\textit{double time-scale} estimation), requiring much faster data-processing/communication rate.}

\textcolor{blue}{In the context of adversarial attacks, most observer-based  detection scenarios assume system and/or measurement noise with bounded support, i.e., they consider an upper bound on the noise variable  \cite{kim2018detection,pajic2015attack,chong2015observability,lee2015secure,shoukry2017secure}. In this paper, we make no such assumption; instead, the noise is assumed to be of infinite support (i.e., it can take any arbitrarily large value with bounded second-order moment). Therefore, we propose probabilistic attack-detection thresholds, in contrast to the deterministic threshold design (or \textit{flag value}) in observer-based detection methods \cite{kim2018detection,pajic2015attack,chong2015observability,lee2015secure,shoukry2017secure}. In another line of research \cite{kailkhura2016data,wang2017data,hashlamoun2017mitigation,chen2016optimal,rosas2017technological,soltanmohammadi2012decentralized,kailkhura2014asymptotic,zheng2017steady}, distributed attack detection without observer/estimator design is considered. These works consider a multi-agent network aiming to detect (typically Byzantine) attack in a sensed signal in a distributed way, with no estimation purpose (due to unknown system model). For example,~\cite{pasqualetti2013attack} uses innovation variance to detect attacks (component malfunctions) in linear-quadratic-Gaussian (LQG) CPS models. However, our main goal is to detect the attacks (in form of biasing anomalies changing the true output values \cite{milovsevivc2017analysis}) deteriorating distributed estimation performance, and, further, to provide a mitigation strategy to restore observability (more precisely, \textit{distributed observability} \cite{globalsip14}). In this regard, this paper performs \textit{simultaneous distributed} estimation and attack-detection, which makes it different from \cite{kailkhura2016data,wang2017data,hashlamoun2017mitigation,chen2016optimal,rosas2017technological,soltanmohammadi2012decentralized,kailkhura2014asymptotic,zheng2017steady,pasqualetti2013attack} performing \textit{only} detection.}

\textcolor{blue}{Of relevance are also \textit{watermarking} strategies \cite{mo2015physical,satchidanandan2016dynamic}, that inject a known input signal (watermark) into the system and track this watermark in the outputs using Chi-square testing ($\chi^2$-detector). Such input injection is not possible for tracking \textit{autonomous} systems, and thus, the physical watermarking is impractical in such cases. 
The distributed strategy in this work is not limited to full-rank LTI systems, in contrast to distributed estimators in \cite{deghat2019detection,battistelli_cdc,sayedtu12,nuno-suff.ness,TCNS2020} over strongly-connected (SC) sensor-networks. Further, unlike the static parameter estimation in \cite{chen2018resilient2} and noiseless centralized attack-detection/estimation in \cite{chen2016dynamic}, this work is based on \textit{distributed} estimation of \textit{noise-corrupted} linear systems. Another relevant topic is compressive sensing  \cite{joseph2018observability,wakin2010observability,li2019distributed,majidi2017distribution,hamidi2016hybrid,xu2015distributed} to translate the data into a compressed dimension, share and  combine the data, reconstruct it to the full dimension, and perform diffusion-based \cite{xu2015distributed} or least mean square (LMS) update \cite{li2019distributed,hamidi2016hybrid,majidi2017distribution} to estimate the original signal. Although the compressed \textit{transmit} of data is applicable in our work (to reduce the communication burden), \textit{distributed dynamic} observability makes our work different from  \cite{li2019distributed,xu2015distributed,majidi2017distribution,hamidi2016hybrid,6712169} based on \textit{static} observability irrespective of the dynamic system model. Recall that this is referred to as the \textit{Static Linear State-Space} (SLS) model in detection literature \cite{giraldo2018survey} and differs from our solution considering \textit{Linear Dynamical State-Space} (LDS) model\footnote{Using the dynamic model of the system (LDS case), fewer outputs are needed to reconstruct the full state of the system (dynamic observability), while in the static or SLS case (with no information of system dynamics) in general more outputs (as many as system states) are needed. Having fewer outputs in the SLS case results in under-determined system of linear equations (unobservability), which mandates substitute recovering solutions such as compressive-sensing or auto-encoder neural networks \cite{agarwal2020assessing}. \textcolor{blue}{A compressive-sensing-based example for the smart-grid application is given in \cite{6712169}, which requires no rank condition on the SLS model.}
}. 
Similarly this work differs from \textit{centralized} estimation in \cite{joseph2018observability,wakin2010observability} with certain assumptions on the sparsity of the initial states \cite{joseph2018observability,wakin2010observability} or system rank  \cite{joseph2018observability}. Autoencoder-based learning is used in some works \cite{wang2020detection,wang2018distributed,wilson2018deep,Khodayar_deep} to distinguish (classify) faulty/attacked data from non-attacked measurement data. In smart-grid applications, the PMU measurements are used to train the detector via either supervised learning \cite{wang2020detection,wang2018distributed} or unsupervised learning \cite{wilson2018deep}. No dynamics is considered in these works (SLS model), 
contrasting our (distributed) observability-based LDS model. Further,  \cite{wang2020detection,wang2018distributed,wilson2018deep} only perform detection with no aim of estimation in the absence of attacks, while some works (see references in \cite{Khodayar_deep}) only perform learning-based estimation with no possibility of detection. Recall that noise (in system dynamics and/or output) plays a key role in the LDS detection. As mentioned before, the assumption on the noise support (finite or infinite) and its value in the finite-case affects the performance of the detection mechanism \cite{kim2018detection,pajic2015attack,chong2015observability,lee2015secure,shoukry2017secure}. Similarly, noise in the output data affects the SLS detection performance, e.g., in power-system applications \cite{li2019distributed,xu2015distributed,majidi2017distribution,hamidi2016hybrid,6712169}. See more details along with a review of centralized physics-based detection mechanisms in \cite{giraldo2018survey}.}

\emph{Main contributions:}
\textcolor{blue}{\begin{inparaenum}[(i)]
\item Our observer-based detection strategy is \textit{localized} and \textit{distributed} over the multi-agent network with \textit{no local observability} assumption at any agent, but  \textit{global observability} at the group of agents. This is key in large-scale, as it enables each agent to detect a (possible) attack on its received output with no central coordination, in contrast to centralized detection scenarios. 
\item Using certain agent classification based on system-rank, we develop detection and attack isolation strategies \textcolor{blue}{which are specific to the measurement types based on the system dynamics (LDS model) (see Section~\ref{sec_necessary} for detailed explanation)}. 
\item The noise is considered over an infinite range with no constraint/bound on its support, which is more realistic for real-world applications (see Remark~\ref{rem_noise}). In this sense, our attack detection and mitigation is categorized as probabilistic (vs. deterministic) thresholding. 
\item In order to prevent repetitive attacks at the same agent by the adversary, we consider an attack mitigation strategy to replace the biased measurement with an observationally-equivalent one (borrowing results from \cite{icassp2016,tnse18}).
\end{inparaenum}
We emphasize that the proposed algorithms for threshold design, agent classification, and mitigation via observational equivalency are of polynomial-order complexity.}


\textit{Notation:} Throughout this paper, scalar and (column) vector variables are respectively represented by lower-case and bold lower-case letters. Further, capital letters represent matrices. The induced $2$-norm of the matrix $A$ is defined as ${\|A\|_2 = \sqrt{\lambda_n}}$ where ${\lambda_n=\rho(A^\top A)}$ and ${\rho(\cdot)}$ denotes the spectral radius of matrix. Further, $|\cdot|$ denotes the Euclidean norm.
 Table~\ref{tab_notation} summarizes the notation in this paper.
\begin{table} [t]
	\centering
		\caption{Notations in the paper (subscript $k$ implies parameter's time index).} 	\label{tab_notation}
		\begin{tabular}{|c|c|}
			\hline
			$n,N$& number of system states, measurements (agents)  \\	
			\hline
			$k$& time index  \\
			\hline
			$\mb{x}_k,\mb{y}_k$ &  column-vector of states, measurements  \\
			\hline
			$\pmb{\nu}_k,\pmb{\zeta}_k$ & zero-mean  system and measurement noise   \\
			\hline
			$\pmb{\tau}_k$ & attack vector  at time $k$  \\	
			\hline
			$A, C$ & system and measurement matrix\\	
			\hline
			$E,R$ & system  and  measurement noise covariance \\			
			\hline
			$\mb{c}_j$ & measurement matrix (column-vector) at agent $j$\\	
			\hline
			$\alpha,\beta,\gamma$ & types of agents \\ \hline
            $\mc{G}_A$ & system digraph associated with system matrix $A$  \\
            \hline
            $\mc{G}_N, \mc{G}_\alpha , \mc{G}_\beta$ & communication network of agents  \\	
            \hline
            $\mc{N}_\alpha(i) , \mc{N}_\beta(i)$ & set of neighbors of agent $i$ over networks $\mc{G}_\alpha , \mc{G}_\beta$  \\	 \hline      $\mc{N}(\cdot,\cdot)$ &  Gaussian distribution  \\  
            \hline
            $\mc{C},\mc{S}^p$ & contraction and parent SCC in the system graph\\
            \hline
            $W$ & stochastic fusion-matrix associated to $\mc{G}_\beta$ \\	
            \hline
            $U$ & adjacency matrix of $\mc{G}_\alpha$  \\
            \hline
            $K$ &  gain matrix (with $K_i$ as $i$-th diagonal-block ) \\
            \hline
            $\widehat{\mb{x}}^i_{k|k-1},\widehat{\mb{x}}^i_{k|k}$ &   \emph{a priori} and \emph{a posteriori} estimate of agent $i$ \\ 
            \hline           
            $\mb{e}_k^i, r_k^i$ &  estimation error and residual of agent $i$ \\
            \hline
            $\theta_\kappa$ &  detection threshold associated with probability $\kappa$ \\
            \hline
             $\kappa$,$\varkappa$ &  threshold probability, probability of false-alarm  \\          
            \hline
            $\mathbf{1}_{NN},\mathbf{1}_N$ & all $1$'s matrix/column-vector of size $N$ \\
            \hline
            $I_{N}$ & Identity matrix of size $N$ \\\hline
            $\mathbb{E}$ &  Expected value operator\\
			\hline
			\hline
	\end{tabular}
\end{table}

\section{Problem Setup}\label{sec_prob}
\subsection{Linear Dynamical System} \label{sec_system}
\textcolor{blue}{Following the discussions in Section~\ref{sec_intro}, we consider \textit{noise-corrupted} linear discrete-time systems (LDS model \cite{giraldo2018survey})}
as, 
\begin{align}\label{eq_sys1}
\mb{x}_{k+1} &= A\mb{x}_k + \pmb{\nu}_k,
\end{align}
with ${\mb{x}_k \in \mathbb{R}^n}$ as the column-vector of states at time $k$, $A$ as the system matrix, 
and ${\pmb{\nu}_k \sim \mc{N}(0,E)}$ as the system noise vector. \textcolor{blue}{Throughout the paper, the system-rank refers to the rank of the system matrix $A$. }
Consider a group of $N$ agents with scalar outputs given by ${y}^i_k = \mb{c}^\top_i\mb{x}_k + \pmb{\zeta}^i_k+{\tau}^i_k$ and the vector form as,
\begin{align}
\mb{y}_k &= C\mb{x}_k + \pmb{\zeta}_k+\pmb{\tau}_k, \label{eq_C}
\end{align}
with ${\mb{y}_k \in \mathbb{R}^N}$ as the column-vector of state measurements (or system outputs) ${\mb{y}_k=(y^1_k,\dots,y^N_k)^\top}$, ${\pmb{\zeta}_k = (\zeta^1_k,\dots,\zeta^N_k)^\top \sim \mc{N}(0,R)}$  as the  measurement  noise vector, and ${\pmb{\tau}_k = (\tau^1_k,\dots,\tau^N_k)^\top}$ as the column-vector of  biasing attack  at the agents. We assume arbitrary attack $\pmb{\tau}_k$ by the adversary, e.g., both fixed stationary attack and non-stationary attacks are considered for simulation (Section~\ref{sec_sim}).
Further, the measurement matrix  ${C=[\mb{c}^\top_1;\dots;\mb{c}^\top_N]}$ is the column concatenation of row-vectors $\mb{c}^\top_i$ associated with agent~$i$ (with~``;'' as  column concatenation). Standard assumptions on Gaussianity and independence of  noise terms  are considered. For example, it is typical to assume that the sensor measurements are independent, making the measurement noise covariance matrix $R$ diagonal.


\begin{rem} \label{rem_noise}
Several papers in the literature (e.g., \cite{kim2018detection,pajic2015attack,chong2015observability,lee2015secure,shoukry2017secure}) assume constrained noise  ${|\pmb{\nu}_k|<\delta}$ and/or ${|\pmb{\zeta}_k|<\delta}$, where the upper bound $\delta$ on the noise support sets the deterministic thresholds for attack detection. For example, in \cite{pajic2015attack} the deterministic threshold at sensor $i$ is defined as ${\mc{D}_i = \|\mc{O}_i\|_2 \|\mb{e}^i_k\|_2 + 2 \delta}$  with ${\|\mc{O}_i\|_2}$  and ${\|\mb{e}^i_k\|_2}$  as the 2-norm of the observability Grammian and the state-estimation error, respectively. In contrast, we make no such finite support assumption (loosely speaking,  ${\delta \rightarrow \infty}$), while it is standard to assume that the second moments of the noise terms are  finite, i.e., ${\mathbb{E}(\pmb{\nu}_k^\top \pmb{\nu}_k) <\infty}$  and ${\mathbb{E}(\pmb{\zeta}_k^\top \pmb{\zeta}_k) <\infty}$.  Assuming unbounded $\delta$, the deterministic threshold, for example $\mc{D}_i$ in \cite{pajic2015attack},  also goes unbounded (${\rightarrow \infty}$), and thus, no attack can be detected. Similar arguments hold for \cite{kim2018detection,chong2015observability,lee2015secure,shoukry2017secure}. 

\end{rem}
\begin{rem} \label{rem_marting}
Note that noise Gaussianity is a standard assumption in most distributed estimation/filtering and  attack detection literature, e.g., see \cite{flock,milovsevivc2017analysis,yang2020distributed,acc13,isj2020,xu2016distributed,shiri2018distributed,dehghani2020deep,cui2012coordinated,rawat2015detection,camsap11,drayer2019detection,pereira2013diffusion,spl17,mustafa2019secure,wen2018distributed,kailkhura2016data,chen2016optimal,zheng2017steady,mo2015physical,satchidanandan2016dynamic,TCNS2020,chen2016dynamic,icassp13,asilomar11,khan2014collaborative}.
\end{rem}

\subsection{Agent Classification based on Structural Analysis}\label{sec_necessary}
The notion of observability used throughout this paper is structural~\cite{asilomar11,woude:03,liu-pnas} and the theory is build on this notion. It is known that the rank deficiency of matrix $A$ and strong-connectivity of \textit{system digraph} $\mc{G}_A$ affect its structural observability properties, and further, its estimation performance. In this direction, using structured systems theory and generic analysis \cite{woude:03,liu-pnas}, we propose specific sensor/agent classification based on the structure (zero-nonzero pattern) of the system matrix $A$ and system digraph $\mc{G}_A$. Using the theory developed in \cite{icassp2016,isj2020}, the agents are partitioned into different classes based on their state-measurements. \textcolor{blue}{We specifically show in Section~\ref{sec_main_alg} that the detection and mitigation logic differs for each class.} First, we describe some relevant graph-theoretic notions. In $\mc{G}_A$, every node represents a state and every link represents a fixed non-zero entry of $A$ ($A_{ij}$ implies ${j \rightarrow i}$  as a link from node $j$ to node $i$). In $\mc{G}_A$ a strongly-connected-component (SCC)  
is a component in which every node is connected to every other node via a path. Define a parent SCC $\mc{S}^p_l$ as an SCC with no out-going links to other SCCs.
Further, a contraction $\mc{C}_l \in \mc{C}$ 
is a component for which ${|\mc{N}_{\mc{G}}(\mc{C}_l)|<|\mc{C}_l|}$,  with ${\mc{N}_{\mc{G}}(\mc{C}_l)=\{b|a \rightarrow b, a \in \mc{C}_l\}}$  
and $|\cdot|$ as the set cardinality.
Based on these graph components, three types of agents are defined as follows,
\begin{itemize}
	\item \textbf{$\alpha$-agent} is an agent with measurement of a state node in a contraction $\mc{C}_l$.
	\item \textbf{$\beta$-agent} is an agent with measurement of a state node in a parent SCC $\mc{S}^p_l$.
	\item \textbf{$\gamma$-agent} is any agent which is neither type $\alpha$ nor $\beta$.
\end{itemize}
\begin{figure}[t]
	\centering
	\includegraphics[width=3.44in]{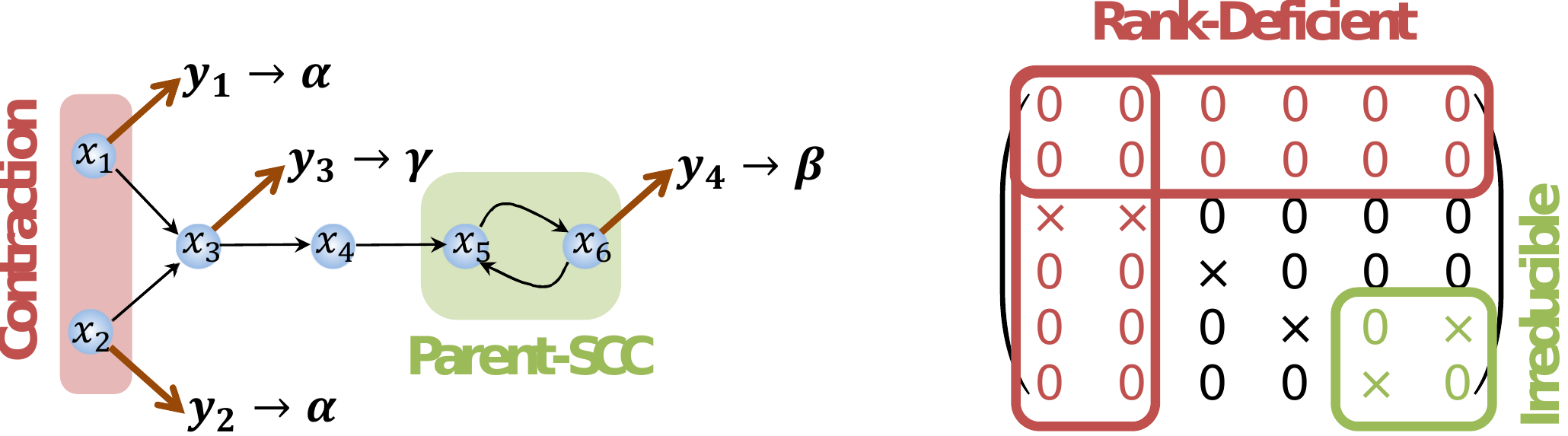}
	\caption{ \textcolor{blue}{This figure illustrates the proposed agent/measurement classification over a simple system digraph and its associated system matrix: $\alpha$-agents with outputs $y_1$ and $y_2$ from a contraction (two state nodes contracting into one state node), $\beta$-agent  with output $y_4$ from a parent SCC (two linked state nodes with no outgoing link to other components), and a redundant $\gamma$-agent (with output $y_3$) which is neither type $\alpha$ nor type $\beta$. As illustrated in the zero-nonzero pattern of the system matrix (right), the contraction represents system (structural) rank-deficiency and the parent-SCC is associated with the irreducible block (with zero entries in the upper/lower non-diagonal blocks). See more details in \cite{icassp2016}.  }  } 
	\label{fig_prob}
\end{figure}
An example of such classification is given in Section~\ref{sec_sim}. This partitioning has two advantages: (i) it allows using a different communication topology for different types of agents and simpler topology design when one or the other type of agents is not present; and (ii) it allows for the attack detection and mitigation strategy to be specifically defined for each type (see details in Section~\ref{sec_main_alg}). In particular, following \cite{icassp13}, it can be shown that any $\alpha$-agent recovers the (structural) rank condition for observability, while the $\beta$-agent recovers the output-connectivity of the system digraph \cite{asilomar11}. Therefore, both $\alpha$ and $\beta$-agents are necessary for observability, while removing (redundant) $\gamma$-agents has no effect on system observability. \textcolor{blue}{Recall that the structural properties are irrespective of the numerical values of system parameters \cite{woude:03}; therefore, for a structure-invariant matrix $A$ the proposed classification is fixed and time-invariant.}

\subsection{Problem Statement}
\begin{figure}[t]
	\centering
	\includegraphics[width=2.8in]{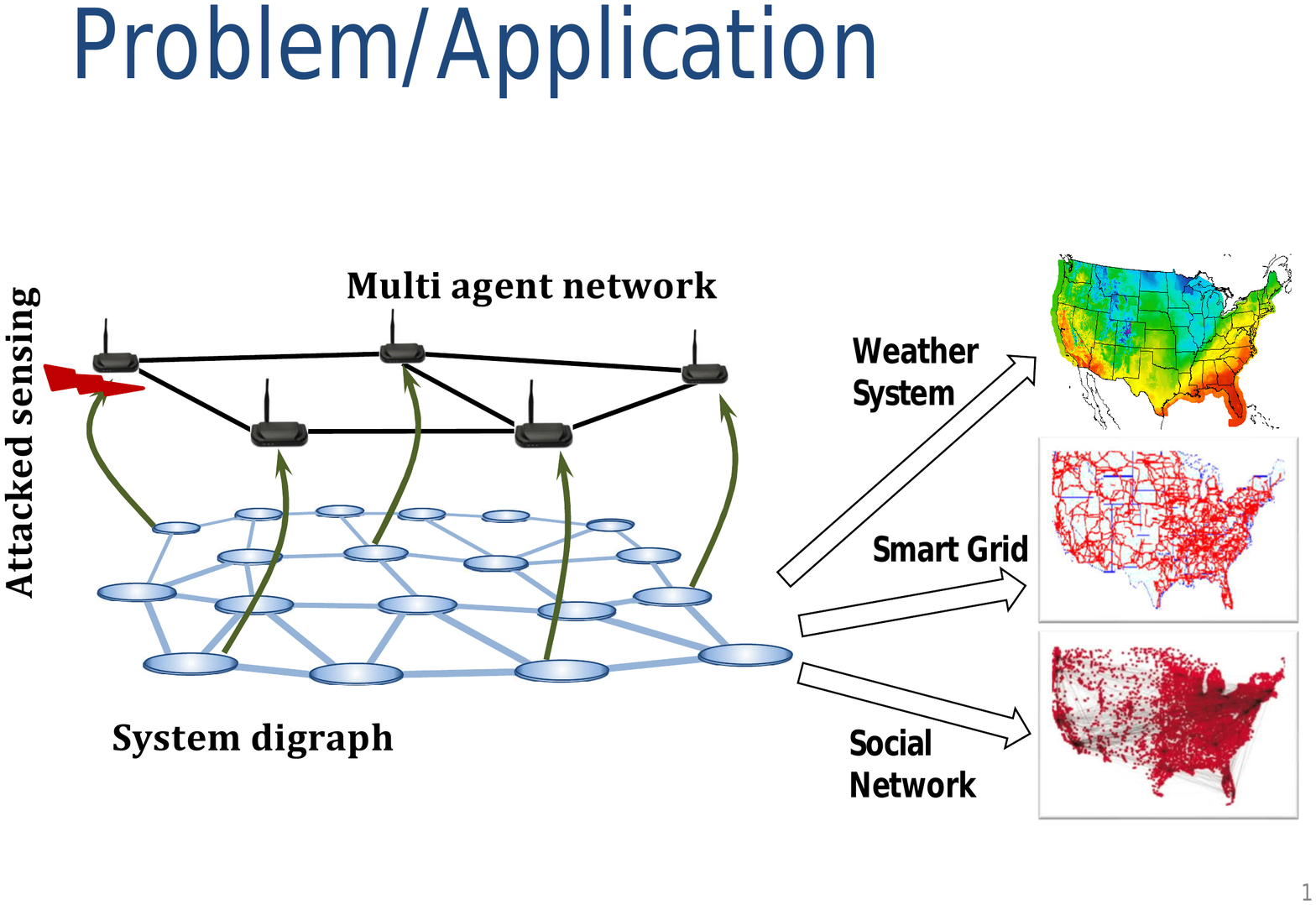}
	\caption{ In this work, there exist two graph representations: (i) system digraph $\mathcal{G}_A$ (see Section~\ref{sec_necessary}), representing the interactions of system states (or system nodes), and (ii) multi-agent network (denoted by ${\mc{G}_N = \mc{G}_\alpha \cup \mc{G}_\beta}$). The system digraph models a large-scale state-space system, e.g., social network, power grid, or weather system.  The green arrows show the state measurements/outputs which vulnerable to (possible) adversarial attacks. The agents/sensors are classified based on  their state measurements from specific components in $\mathcal{G}_A$  (see examples in Section~\ref{sec_sim}) and track the global system state locally (i.e., performing distributed estimation) via sharing information  over the network ${\mc{G}_N}$. Attacked (biased)  measurements may affect the estimation performance at all agents. The proposed algorithm in this work enables each agent to \textit{locally} detect if its measurement/output is attacked  or not, and further provides mitigation techniques for resilient estimation. }
	\label{fig_prob}
\end{figure}
This paper considers a group of sensors/agents taking \textit{noise-corrupted} measurements in the form \eqref{eq_C} of a dynamical system (e.g., social network or power grid) in the form \eqref{eq_sys1} represented by a \textit{system digraph}~$\mathcal{G}_A$, see Fig.~\ref{fig_prob}. The agents perform distributed estimation over a network, denoted by ${\mc{G}_N = \mc{G}_\alpha \cup \mc{G}_\beta}$ 
to track the state of the noisy dynamical system \eqref{eq_sys1}. \textcolor{blue}{Note that the networks $\mc{G}_\alpha$, $\mc{G}_\beta$, and their union $\mc{G}_N$ include all the agents of type $\alpha$, $\beta$, and $\gamma$.} It is assumed that an adversarial attacker aims to add an arbitrary value ${\tau}^i_k$ (at any time $k$) to make the measurement at (one or more) agent $i$ biased from its original value. Since the dynamical system is not necessarily observable at any agent, the biased measurements (at $\alpha/\beta$-agents) affect the estimation error at all agents and result in the degradation of the distributed estimation performance. The problem here is to find a strategy to detect (and isolate) such instantaneous attacks \textit{locally at each agent}.
In particular, we propose a probabilistic detection strategy that returns the probability of attack (at each agent), instead of deterministic strategies returning 0-1 (NoAttack-Attack).  The next question addressed in this paper is  how to recover the potential loss of observability due to removing the attacked measurement depending on its type ($\alpha$, $\beta$, or $\gamma$). Such countermeasures prevent the same adversarial attack by removing the attacked agent/measurement. As explained in Section~\ref{sec_main_alg}, the attacked measurement can be replaced with a new \textit{observationally-equivalent} one to avoid possible repetitive attacks at the same agent.

\subsection{Assumptions} \label{sec_ass}
\begin{enumerate} [(i)]
	\item The pair $(A,C)$ is observable. The pairs ${(A,\mb{c}^\top_j)}$  and ${(A, \mb{c}^\top_{\mc{N}_\alpha(j)})}$  are not necessarily observable at any sensor $j$ or in its neighborhood denoted by ${\mathcal{N}_\alpha(j) \cup \mathcal{N}_\beta(j)}$  (see details in  Section~\ref{sec_est}). This implies that the underlying system $A$ is not necessarily  observable in the neighborhood of any agent.
	\item The noise terms $\pmb{\nu}_k$, $\pmb{\zeta}_k$ are iid Gaussian,
	see Remark \ref{rem_noise}.
	\item The known system matrix $A$ is not necessarily stable, i.e., its spectral radius $\rho({A})$ can be potentially greater than $1$. In other words, this paper applies to both stable and unstable systems. 
	\item The adversary
	can manipulate the state measurements at a subset of sensors by adding erroneous additive term $\pmb{\tau}_k$ at any time $k$. For example,   ${\tau}^i_k$ can be from a uniform distribution over $[-l_\tau,l_\tau]$ with ${l_\tau \gg \|R\|_2,~l_\tau \gg\|E\|_2}$ (${l_\tau \rightarrow \infty}$ in general) or ${\tau}^i_k$ can be a fixed value. In general, the term ${\tau}^i_k$ may be non-zero at some time-instants $k$ (instantaneous attack) and zero at some other times.
\end{enumerate}

\section{Distributed Estimation under Possible Measurement Attacks} \label{sec_est}
In this section, we propose a consensus-based distributed estimation (filtering) protocol over the multi-agent network. The proposed protocol performs \textit{one} iteration of information sharing and consensus  between every two consecutive steps of system dynamics as follows:

\begin{align}\label{eq_p}
\widehat{\mb{x}}^i_{k|k-1} &= \sum_{j\in\mathcal{N}_\beta(i)} W_{ij}A\widehat{\mb{x}}^j_{k-1|k-1},
\\\label{eq_m}
\widehat{\mb{x}}^i_{k|k} &= \widehat{\mb{x}}^i_{k|k-1} + K_i \sum_{j\in \mc{N}_\alpha(i)}\mb{c}_j \left({y}^j_k-\mb{c}^\top_j\widehat{\mb{x}}^i_{k|k-1}\right),
\end{align}
where ${y}^j_k$ is the measurement of agent $j$ at time $k$ that could be attack-corrupted (or biased), $\mathcal{N}_\beta(i)$ and $\mathcal{N}_\alpha(i)$ are the neighborhood of agent $i$, respectively, over network $\mc{G}_\beta$ and $\mc{G}_\alpha$,   $K_i$ is the local feedback gain (or the observer gain) matrix at agent $i$, and $\widehat{\mb{x}}^i_{k|k-1}$ and $\widehat{\mb{x}}^i_{k|k}$ are the (column-vector of) estimates of system state $\mb{x}_k$ at agent $i$ given the measurements, respectively, up to time $k-1$ and  $k$. In fact, $\widehat{\mb{x}}^i_{k|k-1}$ is the \textit{a-priori} estimate (or \textit{prediction}) and $\widehat{\mb{x}}^i_{k|k}$ is the \textit{posteriori} estimate after \textit{measurement-update} at time-step $k$.
\begin{rem}
	In this work, the combination of the following two graphs forms the multi-agent network: (i) $\mc{G}_\beta$ over which agents share the estimates $\widehat{\mb{x}}^j_{k-1|k-1}$, and (ii) $\mc{G}_\alpha$ over which agents share their measurements ${y}^j_k$. Define matrices $W$ and $U$ as the associated matrices to the graphs $\mc{G}_\beta$ and $\mc{G}_\alpha$, respectively. 
	The matrix $U=\{U_{ij}\}$ is the 0--1 adjacency matrix of $\mc{G}_\alpha$, with $U_{ij}=1$ associated to the link ${j\rightarrow i}$  in $\mc{G}_\alpha$ from $\alpha$-agent $j$ to every agent $i$. The \textit{non-zero} entries of $W=\{W_{ij}\}$ take values in the range ${0<W_{ij}\leq 1}$  associated to the link ${j\rightarrow i}$  in $\mc{G}_\beta$.  
\end{rem}
Matrix $W$  is  row-stochastic  to ensure \textit{consensus} on a-priori estimates, i.e., $\sum_{j =1}^n W_{ij} = \sum_{j\in\mathcal{N}_\beta(i)} W_{ij} = 1$ for all $i,j$. Such a matrix $W$ (and the graph $\mc{G}_\beta$) can be formed via  distributed algorithms in \cite{themis_stochastic}.
The structure of $\mc{G}_\beta$ and $\mc{G}_\alpha$ (and the associated matrices) need to be designed properly for bounded steady-state  estimation error, see Section~\ref{sec_err_stab}.  
\begin{rem} \label{rem_scale}
	 The proposed protocol \eqref{eq_p}-\eqref{eq_m} is a single time-scale distributed estimator, where the estimation is performed at the same time-scale of the system dynamics. This is in contrast to the double time-scale protocols  \cite{he2019secure,he2020secure,flock}, which require much faster estimation and communication rate than the sampling rate of the system dynamics, and, therefore, demand more costly communication and processing equipment. However, the observability assumption in \cite{he2019secure,he2020secure,flock} is similar to  Assumption~(ii), which makes such scenarios suitable for large-scale applications as the proposed protocol \eqref{eq_p}-\eqref{eq_m}; see examples in Section~\ref{sec_sim}. 
\end{rem}


\vspace{\belowdisplayskip}
\par\noindent\rule{\dimexpr(0.5\textwidth-0.5\columnsep-0.4pt)}{0.4pt}%
\rule{0.4pt}{6pt}
\begin{strip}
Denote the estimation error at agent $i$ at time $k$ by ${\mb{e}_{k}^i \triangleq \mb{x}_{k|k} - \widehat{\mb{x}}^i_{k|k}}$ and let ${\mb{e}_{k} = (\mb{e}_{k}^{1}; \dots;\mb{e}_{k}^N)}$  be the global or collective error. Then, the following proposition defines the error dynamics of the protocol \eqref{eq_p}-\eqref{eq_m}. 
\begin{prop}
The global error dynamics for protocol~\eqref{eq_p}-\eqref{eq_m} is,
\begin{align}
\mb{e}_{k} &= (W\otimes A - KD_C(W\otimes A))\mb{e}_{k-1} +
\pmb{\eta}_k  
= \widehat{A}\mb{e}_{k-1} +
\pmb{\eta}_k, \label{eq_err1} \\
\pmb{\eta}_k &= \mathbf{1}_N \otimes \pmb{\nu}_{k-1} 
- K D_C(\mathbf{1}_N \otimes \pmb{\nu}_{k-1}) - K\overline{D}_C\pmb{\zeta}_{k} -K\overline{D}_C\pmb{\tau}_{k},
\label{eq_eta}
\end{align}
where $\pmb{\eta}_k$ collects the noise terms, ${\widehat{A}\coloneqq W\otimes A - KD_C(W\otimes A)}$,
${K\triangleq \mbox{blockdiag}[K_i]}$,   ${D_C\triangleq\mbox{blockdiag}[\sum_{j\in \mathcal{N}_\alpha(i)} \mb{c}_j \mb{c}^\top_j]}$, and  $\overline{D}_C \triangleq (U \otimes \mb{1}_n) \circ (\mb{1}_N \otimes C^\top  ) $ with ``$\circ$'' and ``$\otimes$'', respectively, as the entrywise (Hadamard) and Kronecker product.
\end{prop}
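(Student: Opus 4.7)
The plan is to derive the agent-wise error recursion by direct algebraic manipulation, then stack the $N$ local recursions into global (Kronecker/Hadamard) form, matching the operators $D_C$ and $\overline{D}_C$ defined in the statement.

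First, I would introduce the local \emph{prediction error} $\mb{e}_{k|k-1}^{i}\triangleq \mb{x}_k-\widehat{\mb{x}}^i_{k|k-1}$. Using the system dynamics $\mb{x}_k=A\mb{x}_{k-1}+\pmb{\nu}_{k-1}$ together with the row-stochasticity $\sum_{j\in\mathcal{N}_\beta(i)}W_{ij}=1$, I can rewrite
\begin{equation*}
\mb{x}_k=\sum_{j\in\mathcal{N}_\beta(i)}W_{ij}A\mb{x}_{k-1}+\pmb{\nu}_{k-1},
\end{equation*}
and subtracting the prediction step \eqref{eq_p} gives $\mb{e}_{k|k-1}^{i}=\sum_{j\in\mathcal{N}_\beta(i)}W_{ij}A\mb{e}_{k-1}^{j}+\pmb{\nu}_{k-1}$.

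Second, I would substitute $y^{j}_{k}=\mb{c}_j^\top \mb{x}_k+\zeta^j_k+\tau^j_k$ into the measurement-update \eqref{eq_m}, expand the innovation term, and subtract from $\mb{x}_k$ to obtain the local posterior error
\begin{equation*}
\mb{e}_{k}^{i}=\Bigl(I-K_i\!\!\sum_{j\in \mc{N}_\alpha(i)}\!\!\mb{c}_j\mb{c}_j^{\top}\Bigr)\mb{e}_{k|k-1}^{i}-K_i\!\!\sum_{j\in \mc{N}_\alpha(i)}\!\!\mb{c}_j\bigl(\zeta^j_k+\tau^j_k\bigr).
\end{equation*}
Inserting the prediction-error expression above yields an $i$-indexed recursion in terms of $\{\mb{e}_{k-1}^j\}$, $\pmb{\nu}_{k-1}$, $\pmb{\zeta}_k$, and $\pmb{\tau}_k$.

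Third, I would assemble the $N$ equations into the stacked variable $\mb{e}_k=(\mb{e}_k^1;\dots;\mb{e}_k^N)$. The coupling through $W$ becomes $W\otimes A$; the block-diagonal operator $K=\mathrm{blockdiag}[K_i]$ and $D_C=\mathrm{blockdiag}\bigl[\sum_{j\in\mathcal{N}_\alpha(i)}\mb{c}_j\mb{c}_j^\top\bigr]$ handle the innovation gain acting on the prediction error; and the $\pmb{\nu}_{k-1}$ term becomes $\mathbf{1}_N\otimes\pmb{\nu}_{k-1}-KD_C(\mathbf{1}_N\otimes\pmb{\nu}_{k-1})$. This already reproduces \eqref{eq_err1} with $\widehat{A}=W\otimes A-KD_C(W\otimes A)$ and the first two summands of \eqref{eq_eta}.

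The only non-routine step — and the piece I would be most careful with — is verifying that the scalar-weighted selection $\sum_{j\in\mathcal{N}_\alpha(i)}\mb{c}_j\xi^j$ (applied to $\pmb{\zeta}_k$ and $\pmb{\tau}_k$) equals block $i$ of $\overline{D}_C\pmb{\xi}$ with $\overline{D}_C=(U\otimes\mathbf{1}_n)\circ(\mathbf{1}_N\otimes C^\top)$. To confirm this, I would write $C^\top=[\mb{c}_1,\dots,\mb{c}_N]$ and examine the $(i,j)$ block of the Hadamard product: $\mathbf{1}_N\otimes C^\top$ contributes the column $\mb{c}_j$, while $U\otimes\mathbf{1}_n$ contributes the scalar $U_{ij}$ broadcast over $n$ rows; hence block $(i,j)$ equals $U_{ij}\mb{c}_j$, and block $i$ of $\overline{D}_C\pmb{\xi}$ equals $\sum_j U_{ij}\mb{c}_j\xi^j=\sum_{j\in\mathcal{N}_\alpha(i)}\mb{c}_j\xi^j$, as required. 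Pre-multiplying by $K$ then gives the terms $-K\overline{D}_C\pmb{\zeta}_k$ and $-K\overline{D}_C\pmb{\tau}_k$ in \eqref{eq_eta}, completing the proof.
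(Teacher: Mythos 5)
Your proposal is correct and takes essentially the same route as the paper's proof: both use the row-stochasticity of $W$ to write $A\mb{x}_{k-1}=\sum_{j\in\mathcal{N}_\beta(i)}W_{ij}A\mb{x}_{k-1}$, substitute the measurement model into the innovation, and stack the $N$ local recursions via the Kronecker/Hadamard structure. The only differences are organizational --- you split the derivation into a prediction-error step and a measurement-update step rather than one combined substitution, and you explicitly verify that block $(i,j)$ of $\overline{D}_C$ equals $U_{ij}\mb{c}_j$, a check the paper leaves implicit --- neither of which changes the substance of the argument.
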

\begin{proof}
The error at each agent $i$ is as follows,
\begin{align}
\mb{e}_{k}^i &=\mb{x}_{k} - \Bigl(\sum_{j\in\mathcal{N}_\beta(i)} W_{ij}A\widehat{\mb{x}}^j_{k-1|k-1} \nonumber + K_i \sum_{j\in \mc{N}_\alpha (i)}\mb{c}_j
({y}^j_k-\mb{c}^\top_j\sum_{j\in\mc{N}_\beta (i)} W_{ij}A\widehat{\mb{x}}^j_{k-1|k-1})\Bigr). \nonumber
\end{align}
Recalling stochasticity of $W$ matrix, we have ${A\mb{x}_{k-1}=\sum_{j\in \mathcal{N}_\beta(i)} W_{ij}A\mb{x}_{k-1}}$. Substituting this along with equations~\eqref{eq_sys1}-\eqref{eq_C},
\begin{align}
\mb{e}_{k}^i &=\sum_{j\in \mathcal{N}_\beta(i)} W_{ij}A\mb{x}_{k-1} - \sum_{j\in\mathcal{N}_\beta(i)} W_{ij}A\widehat{\mb{x}}^j_{k-1|k-1} \nonumber \\ &- K_i \sum_{j\in \mathcal{N}_\alpha(i)} \mb{c}_j \mb{c}^\top_j\Bigl(\sum_{j\in \mathcal{N}_\beta(i)}  W_{ij}A\mb{x}_{k-1} - \sum_{j\in \mathcal{N}_\beta(i)}   W_{ij}A\widehat{\mb{x}}^j_{k-1|k-1} \Bigr) + \pmb{\nu}_{k-1}-K_i\sum_{j\in \mathcal{N}_\alpha(i)} \mb{c}_j \zeta^j_k +
\mb{c}_j \tau^j_{k}+ \mb{c}_j \mb{c}^\top_j\pmb{\nu}_{k-1} \nonumber \\
&=\sum_{j\in \mathcal{N}_\beta(i)} W_{ij}A\mb{e}_{k-1}^j -K_i \sum_{j\in \mathcal{N}_\alpha(i)} \mb{c}_j \mb{c}^\top_j\sum_{j\in \mathcal{N}_\beta(i)}  W_{ij}A\mb{e}^j_{k-1}+ \pmb{\eta}_{k}^i, \label{eq_err_i}
\end{align}
with ${\pmb{\eta}_{k}^i \triangleq \pmb{\nu}_{k-1}-K_i\sum_{j\in \mathcal{N}_\alpha(i)} (\mb{c}_j \zeta^j_k +
\mb{c}_j \tau^j_{k}+ \mb{c}_j \mb{c}^\top_j\pmb{\nu}_{k-1} )}$. Using the definition of Kronecker and entrywise products, the collective error and noise term follow Eq.~\eqref{eq_err1}-\eqref{eq_eta}. 
\end{proof}
\end{strip}
\hfill\rule[-6pt]{0.4pt}{6.4pt}%
\rule{\dimexpr(0.5\textwidth-0.5\columnsep-1pt)}{0.4pt}

\subsection{Error Stability} \label{sec_err_stab}
The following lemma establishes the stability condition of the error dynamics~\eqref{eq_err1}-\eqref{eq_eta}. 
\begin{lem}
	The necessary condition for error dynamics~\eqref{eq_err1}-\eqref{eq_eta} to be stable is that the pair ${(W\otimes A,D_C)}$  is observable. 
\end{lem}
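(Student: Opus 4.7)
The plan is to proceed by contrapositive using the PBH (Popov--Belevitch--Hautus) eigenvector test. First, I would rewrite the closed-loop error matrix in the factored form
\[
\widehat{A} \;=\; (W\otimes A) - K D_C(W\otimes A) \;=\; (I_{nN} - K D_C)(W\otimes A),
\]
which exposes $D_C$ acting on the right-hand factor in a way that mimics a standard observer structure, just with the ``output matrix'' being $D_C(W\otimes A)$ rather than $D_C$ alone.

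Next I would suppose, for contradiction, that $(W\otimes A,\,D_C)$ fails to be observable. By the PBH test there exist $\lambda \in \mathbb{C}$ and a nonzero $v \in \mathbb{C}^{nN}$ such that $(W\otimes A)\,v = \lambda v$ and $D_C v = 0$. Substituting into the factored form gives
\[
\widehat{A}\,v \;=\; (I_{nN} - K D_C)(W\otimes A)\,v \;=\; \lambda v - \lambda K D_C v \;=\; \lambda v,
\]
so $\lambda$ is an eigenvalue of $\widehat{A}$ for \emph{every} choice of the block-diagonal gain $K$. In other words, the unobservable modes of $(W\otimes A,\,D_C)$ are fixed modes of the error dynamics that cannot be reassigned by any observer gain.

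I would then invoke Assumption~(iii): $A$ is not assumed Schur, so $\rho(A)$ may exceed $1$, and since the spectrum of $W\otimes A$ consists of products of eigenvalues of $W$ (with at least one eigenvalue at $1$ by row-stochasticity) and eigenvalues of $A$, the unobservable mode $\lambda$ can lie outside the unit disk. Whenever this occurs, $\widehat{A}$ has an eigenvalue with $|\lambda|\geq 1$ regardless of $K$, so the deterministic part of the recursion $\mb{e}_k = \widehat{A}\mb{e}_{k-1}+\pmb{\eta}_k$ is not Schur and the error is not mean-square stable. Taking the contrapositive yields the necessity of observability of $(W\otimes A,\,D_C)$.

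The main subtlety, rather than a computational obstacle, is reconciling the statement (\emph{observability}) with the weaker PBH conclusion (that only detectability is strictly forced by stability). I would handle this by appealing to the structural/generic viewpoint invoked throughout Section~\ref{sec_necessary}: in the structural setting one cannot a priori constrain the numerical location of the spectrum of $W\otimes A$, so to guarantee a stabilizing $K$ exists for every admissible numerical realization consistent with the fixed zero-nonzero pattern, every mode must be observable through $D_C$. The remaining care is to make sure the PBH argument respects the block-diagonal constraint $K=\mathrm{blockdiag}[K_i]$, which it does automatically since the calculation above never used any freedom of $K$ beyond matrix multiplication.
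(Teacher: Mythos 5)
Your proposal is correct, and it is considerably more explicit than what the paper actually provides: the paper's ``proof'' of this lemma is a one-line appeal to the Kalman stability theorem for linear observer design together with citations, with no argument given. Your PBH/fixed-mode computation --- factoring $\widehat{A}=(I_{nN}-KD_C)(W\otimes A)$ and showing that any unobservable eigenpair $(\lambda,v)$ of $(W\otimes A, D_C)$ survives as an eigenpair of $\widehat{A}$ for \emph{every} block-diagonal $K$ --- is exactly the standard content behind that citation, so the two routes are mathematically the same idea; yours simply spells it out. Your handling of the observability-versus-detectability gap is also the right instinct: strictly speaking, for a fixed numerical realization only detectability of $(W\otimes A,D_C)$ is forced, and the lemma's claim of \emph{observability} as the necessary condition is justified only under the paper's structural/generic viewpoint (Assumption~(iii) allows $\rho(A)>1$, and the generic analysis cannot pin the unobservable mode inside the unit disk). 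The paper glosses over this distinction entirely, so flagging and resolving it is a genuine improvement rather than a deviation. One minor point worth keeping in mind: the spectrum of $W\otimes A$ is the set of products $\mu_i\lambda_j$ of eigenvalues of $W$ and $A$, and row-stochasticity of $W$ guarantees an eigenvalue at $1$, so an unstable mode of $A$ does propagate to $W\otimes A$ as you assert; that step is sound.
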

\begin{proof}
	The proof follows the Kalman stability theorem on the error dynamics \eqref{eq_eta}. More information can be found in \cite{bay,asilomar11,usman_cdc:11} on error stability of linear observer design.
\end{proof}

Note that ${(W\otimes A,D_C)}$-observability is also referred to as the distributed observability \cite{globalsip14}. Using structured system theory (generic analysis), distributed observability can be formulated as the observability of the Kronecker product of the graphs $\mc{G}_A$ and $\mc{G}_\beta$. Following the observability analysis of  Kronecker composite networks in  \cite{kronecker_TSIPN}, the following lemma determines the sufficient connectivity of $\mathcal{G}_\beta$ and $\mathcal{G}_\alpha$. 

\begin{lem} \label{lem_kron}
	The pair ${(W\otimes A,D_C)}$  is observable if and only if the following conditions hold: 
	\begin{enumerate}
		\item $\mathcal{G}_\beta$ is strongly-connected (SC) with self-link at each agent, which further implies that $W$ is irreducible.
		\item $\mathcal{G}_\alpha$ is a hub-network in which every $\alpha$-agent is a hub, i.e., there is a directed link from every $\alpha$-agent  to every other agent in $\mathcal{G}_\alpha$. Further,  ${i \in \mathcal{N}_\alpha(i)}$ for every agent $i$.
	\end{enumerate}
\end{lem}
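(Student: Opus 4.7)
The plan is to invoke the genericity framework of structured systems \cite{woude:03,liu-pnas} and reduce observability of $(W\otimes A, D_C)$ to two graph-theoretic conditions on the \emph{composite digraph} associated with the Kronecker product $W\otimes A$: (i) every contraction in the composite digraph is covered by an output of $D_C$ (the \emph{rank condition}), and (ii) every parent SCC of the composite digraph contains at least one state that is output-connected to $D_C$ (the \emph{output-connectivity condition}). The required graph-theoretic machinery, including how contractions and parent SCCs lift under Kronecker products of digraphs, is available from \cite{kronecker_TSIPN} and the agent classification ($\alpha/\beta/\gamma$) of Section~\ref{sec_necessary}. I would then match each of the two conditions in the lemma to exactly one of these two structural requirements.

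\textbf{Sufficiency.} Assume $\mc{G}_\beta$ is strongly connected with a self-loop at every agent (so $W$ is irreducible with positive diagonal), and $\mc{G}_\alpha$ is a hub-network in which every $\alpha$-agent links to every agent including itself. First, using the lifting result of \cite{kronecker_TSIPN}, the parent SCCs of the composite digraph are in one-to-one correspondence with (parent SCC of $\mc{G}_\beta$)$\times$(parent SCC of $\mc{G}_A$); strong connectivity of $\mc{G}_\beta$ collapses the first factor to one component, so every parent SCC of the composite digraph projects onto a parent SCC $\mc{S}^p_l$ of $\mc{G}_A$, which by Section~\ref{sec_necessary} is measured by at least one $\beta$-agent, call it $j$. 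Since $j\in\mc{N}_\beta(i)$ via paths in the SC graph $\mc{G}_\beta$ and $j\in\mc{N}_\alpha(i)$ holds trivially (self-link) or is realized by the hub property when needed, the block $D_C$ carries the corresponding $\mb{c}_j\mb{c}_j^\top$ term, providing output-connectivity for that parent SCC. Second, every contraction of the composite digraph lifts from a contraction $\mc{C}_l$ in $\mc{G}_A$ that is measured by some $\alpha$-agent $j$; the hub-network assumption guarantees $j\in\mc{N}_\alpha(i)$ for every $i$, so the column $\mb{c}_j$ appears in the $i$-th diagonal block of $D_C$ for all $i$, covering every lifted copy of $\mc{C}_l$. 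Thus both generic observability conditions hold, and $(W\otimes A, D_C)$ is observable.

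\textbf{Necessity.} For the converse I would argue by contrapositive. If $\mc{G}_\beta$ fails to be SC, then after reduction one obtains a non-parent SCC of $\mc{G}_\beta$ whose Kronecker product with a parent SCC of $\mc{G}_A$ produces a parent SCC of the composite digraph that lies entirely in agents not receiving the necessary $\beta$-measurement through the diffusion/fusion graph, violating output-connectivity. If some $\alpha$-agent $j$ is not a hub in $\mc{G}_\alpha$, then there exists an agent $i$ with $j\notin\mc{N}_\alpha(i)$, so its diagonal block of $D_C$ misses $\mb{c}_j\mb{c}_j^\top$; the contraction $\mc{C}_l$ associated with $j$ then has an uncovered copy in the composite digraph, violating the rank condition. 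Likewise, absence of self-links in either graph leaves an agent whose local block lacks the needed innovation, and the same structural obstruction reappears. In each case the generic unobservability certifies the necessity.

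\textbf{Main obstacle.} The delicate step is the correspondence between contractions/parent SCCs of the composite digraph and the agent-classified components of $\mc{G}_A$ — in particular, verifying that the $\alpha$/$\beta$ partition in Section~\ref{sec_necessary} exhausts the obstructions generated by the Kronecker construction. I would lean on the structural identities derived in \cite{kronecker_TSIPN} (and the surrounding analysis in \cite{icassp2016,asilomar11}) so that the proof reduces to a clean bookkeeping of which blocks of $D_C$ are populated under conditions (1)–(2), rather than a direct manipulation of $W\otimes A$.
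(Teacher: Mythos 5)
Your proposal is correct and follows essentially the same route as the paper's own (sketch) proof: both reduce $(W\otimes A, D_C)$-observability to the structural output-connectivity and rank conditions on the Kronecker composite of $\mc{G}_A$ and $\mc{G}_\beta$, justify output-connectivity via strong connectivity of $\mc{G}_\beta$ and the rank condition via the hub property of $\mc{G}_\alpha$, and defer the lifting of contractions/parent SCCs under the Kronecker product to \cite{kronecker_TSIPN}. Your explicit contrapositive treatment of necessity is a bit more detailed than the paper's, but it is the same argument.
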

\begin{proof}
	We provide the sketch of the proof here and refer the interested reader to \cite{kronecker_TSIPN} for more details. For (structural) observability two conditions on the associated composite graph need to be satisfied \cite{asilomar11,liu-pnas}: (i) the output connectivity condition, implying the existence of a directed path from every state node in the system graph $\mathcal{G}_A$ to an agent (output), and (ii) the rank condition, implying a direct output of (at least) one state node in every contraction in $\mathcal{G}_A$ for system-output rank recovery. In this work, the global system graph associated with ${W\otimes A}$  is the Kronecker-product of $\mathcal{G}_A$ and $\mathcal{G}_\beta$. Recall that for  ${(W\otimes A,D_C)}$-observability (or distributed observability) the global system state must be observable to every agent. Therefore, to satisfy condition (i), every state node needs to be connected via a directed path to every agent, which justifies strong-connectivity of $\mathcal{G}_\beta$. On the other hand, to satisfy condition (ii), the outputs from state nodes measured by all $\alpha$-agents (including one node in every contraction) need to be directly shared among all agents to recover their system-output rank. This implies that for any $\alpha$-agent $j$, we have ${j \in \mathcal{N}_\alpha(i), \forall i\in\{1,\dots,N\} }$. This justifies the connectivity of  $\mathcal{G}_\alpha$, and completes the proof.
\end{proof}
With $\mathcal{G}_\beta$ and $\mathcal{G}_\alpha$ satisfying the conditions in Lemma~\ref{lem_kron}, the block-diagonal gain matrix $K$ can be designed such that ${\rho(\widehat{A})<1}$, i.e., $\widehat{A}$ is a Schur matrix. In fact, the gain matrix $K$ is known to be  the solution to the  Linear-Matrix-Inequality (LMI) ${X-\widehat{A}^\top X\widehat{A} \succ 0}$
or equivalently,
\begin{align}\label{eq_lmi}
\left(
\begin{array}{cc}
X &  \widehat{A}^\top X \\
X\widehat{A}& X
\end{array}
\right) \succ 0,
\end{align}
for some ${X\succ0}$ (where ``$\succ$'' denotes positive-definiteness). However, to satisfy the distributed condition, $K$ needs to be further block-diagonal in order to satisfy information locality. Following \cite{rami:97,usman_cdc:11}, 
iterative cone-complementarity optimization
method is adopted to design the proper $K$ matrix with polynomial-order complexity. Applying such $K$ matrix, we have ${\rho(\widehat{A})<1}$, which implies stability and steady-state boundedness of the error in the attack-free case.

\subsection{Performance Analysis in the Attack-free Case} \label{sec_perf}
Next, we provide the performance analysis of the proposed distributed estimator (filter) \eqref{eq_p}-\eqref{eq_m} in the attack-free case. Following the same analogy as in \cite{usman_tsp:07,cattivelli2008diffusion,flock,khan2014collaborative}, we analyze the mean performance and mean-square performance of the protocol \eqref{eq_p}-\eqref{eq_m} for $\pmb{\tau}_k = \mb{0}$. 
\begin{lem}\label{lem_Ee}
	Let ${\mb{e}_{\infty} \triangleq \lim_{k \rightarrow \infty} \mb{e}_{k}}$  denote the steady-state error of the proposed estimator \eqref{eq_p}-\eqref{eq_m}. Then, ${\mathbb{E}(\mb{e}_{\infty})=\mb{0}}$.
\end{lem}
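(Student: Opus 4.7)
The plan is to propagate the error recursion \eqref{eq_err1}--\eqref{eq_eta} through the expectation operator and exploit the Schur property of $\widehat{A}$ established by the gain design in Section~\ref{sec_err_stab}. Since the statement concerns the attack-free case, I will set $\pmb{\tau}_k = \mathbf{0}$ throughout, so that
\[
\pmb{\eta}_k = \mathbf{1}_N \otimes \pmb{\nu}_{k-1} - K D_C(\mathbf{1}_N \otimes \pmb{\nu}_{k-1}) - K \overline{D}_C \pmb{\zeta}_k .
\]
Because $\pmb{\nu}_{k-1} \sim \mathcal{N}(0,E)$ and $\pmb{\zeta}_k \sim \mathcal{N}(0,R)$ are zero-mean and $K$, $D_C$, $\overline{D}_C$ are deterministic, linearity of expectation immediately gives $\mathbb{E}(\pmb{\eta}_k) = \mathbf{0}$ for every $k$.

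Next I would take expectations of both sides of the global error recursion \eqref{eq_err1}. The matrix $\widehat{A}$ is deterministic, so we obtain the deterministic linear recursion
\[
\mathbb{E}(\mb{e}_{k}) = \widehat{A}\,\mathbb{E}(\mb{e}_{k-1}),
\]
which, iterated from an arbitrary initial condition, yields $\mathbb{E}(\mb{e}_{k}) = \widehat{A}^{k}\mathbb{E}(\mb{e}_{0})$. By the gain design via the LMI condition \eqref{eq_lmi} together with Lemma~\ref{lem_kron}, the block-diagonal gain $K$ is chosen so that $\widehat{A}$ is Schur, i.e., $\rho(\widehat{A})<1$. Hence $\widehat{A}^{k}\to 0$ as $k\to\infty$, and consequently
\[
\mathbb{E}(\mb{e}_{\infty}) \;=\; \lim_{k\to\infty}\widehat{A}^{k}\,\mathbb{E}(\mb{e}_{0}) \;=\; \mb{0},
\]
independently of the choice of $\mathbb{E}(\mb{e}_{0})$.

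There is no real obstacle in this argument; the only subtlety is ensuring the interchange of expectation with the recursion is legitimate, which follows because $\widehat{A}$ is deterministic and the noise sequences are i.i.d.\ with finite mean (Assumption~(ii) and Remark~\ref{rem_noise}). One may also wish to note explicitly that $\mathbb{E}(\mb{e}_k)$ is well-defined at every finite $k$ by the same argument, so the limit above makes sense. This establishes the unbiasedness claim and sets up the subsequent mean-square (covariance) analysis that the paper will carry out for $\mathbb{E}(\mb{e}_\infty \mb{e}_\infty^\top)$.
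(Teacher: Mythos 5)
Your proof is correct and follows essentially the same route as the paper: take expectations of the error recursion \eqref{eq_err1}, use the zero-mean property of $\pmb{\nu}_{k-1}$ and $\pmb{\zeta}_k$ so that $\mathbb{E}(\pmb{\eta}_k)=\mathbf{0}$ in the attack-free case, and invoke $\rho(\widehat{A})<1$ to kill the homogeneous part. The only cosmetic difference is that you fold the zero-mean noise observation in up front to get a purely homogeneous recursion, whereas the paper keeps the $\mathbb{E}(\pmb{\eta}_k)$ term and evaluates it at the end; the substance is identical.
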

\begin{proof}
    Taking expectation  of the error dynamics~\eqref{eq_err1},
    \begin{align} \label{eq_err_E}
    	\mathbb{E}(\mb{e}_{k}) &= \widehat{A}\mathbb{E}(\mb{e}_{k-1}) +
    	\mathbb{E}(\pmb{\eta}_k).
    \end{align}
    Recall from Section~\ref{sec_err_stab} that ${\rho(\widehat{A})<1}$ and following from \cite{usman_tsp:07,khan2014collaborative}, it is clear that the first term in \eqref{eq_err_E} vanishes asymptotically.   Then, from \eqref{eq_eta} in the attack-free case (${\pmb{\tau}_k = \mb{0}}$),
    \begin{align} \nonumber
    	\mathbb{E}(\mb{e}_{\infty}) &= \mathbb{E}(\pmb{\eta}_{\infty})\\ \nonumber
    	&= \mathbf{1}_N \otimes \mathbb{E}(\pmb{\nu}_{\infty}) 
    	- K D_C(\mathbf{1}_N \otimes \mathbb{E}(\pmb{\nu}_{\infty})) - K\overline{D}_C\mathbb{E}(\pmb{\zeta}_{\infty}).
    \end{align}
    Recall from Section~\ref{sec_system} that ${\mathbb{E}(\pmb{\nu}_{k})=\mb{0}}$  and ${\mathbb{E}(\pmb{\zeta}_{k}) = \mb{0}}$. This implies that ${\mathbb{E}(\mb{e}_{\infty})= \mb{0}}$  and the lemma follows.		
\end{proof}
\begin{lem} \label{lem_Ee2}
	Define ${Q_k \coloneqq \mathbb{E}(\mb{e}_k\mb{e}_k^{\top})}$  and $\Phi \coloneqq \mathbb{E}(\pmb{\eta}_k\pmb{\eta}^\top_k)$. Let ${Q_{\infty} = \lim_{k \rightarrow \infty} Q_{k}}$  denote the collective error covariance at the steady-state. For error dynamics \eqref{eq_err1} in the attack-free case,
	\begin{align} \label{eq_pinfty}
		\|Q_{\infty}\|_2 \leq \frac{a_1N\|E\|_2+a_2 \|\overline{R}\|_2}{1-b^2},
	\end{align}
	with
	${a_1 \triangleq \|I_{Nn}- K D_C\|_2^2}$, and ${a_2 \triangleq \|K\|_2^2}$,  ${\overline{R}\triangleq\mbox{blockdiag}[\sum_{j\in \mathcal{N}_\alpha(i)} \mb{c}_j R_{jj} \mb{c}^\top_j]}$.	
\end{lem}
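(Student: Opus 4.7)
The plan is to convert the error recursion~\eqref{eq_err1} into a discrete-time Lyapunov equation for $Q_k$, bound the single-step covariance injection $\|\Phi\|_2$, and then sum the geometric series generated by the (stable) closed-loop operator $\widehat{A}$.

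First, I would multiply~\eqref{eq_err1} on the right by its transpose and take expectations. The crucial observation is that $\pmb{\eta}_k$, given by~\eqref{eq_eta} with $\pmb{\tau}_k=\mb{0}$, depends only on $\pmb{\nu}_{k-1}$ and $\pmb{\zeta}_k$, whereas $\mb{e}_{k-1}$ is a measurable function of strictly older noise realizations. By the iid Gaussianity assumed in Section~\ref{sec_system}, the cross-term $\mathbb{E}(\widehat{A}\mb{e}_{k-1}\pmb{\eta}_k^\top)$ therefore vanishes and the recursion collapses to the Lyapunov-type identity $Q_k=\widehat{A}\,Q_{k-1}\,\widehat{A}^\top + \Phi$.

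Second, I would bound $\|\Phi\|_2$ by decomposing $\pmb{\eta}_k$ in~\eqref{eq_eta} into its two mutually independent summands. The process-noise piece $(I_{Nn}-KD_C)(\mathbf{1}_N\otimes\pmb{\nu}_{k-1})$ has covariance $(I_{Nn}-KD_C)(\mathbf{1}_N\mathbf{1}_N^\top\otimes E)(I_{Nn}-KD_C)^\top$; using $\|\mathbf{1}_N\mathbf{1}_N^\top\|_2=N$ together with submultiplicativity and the definition of $a_1$ yields the bound $a_1N\|E\|_2$. The measurement-noise piece $K\overline{D}_C\pmb{\zeta}_k$ has covariance $K\overline{D}_C R\overline{D}_C^\top K^\top$; invoking the diagonality of $R$ and the block structure of $\overline{D}_C$ identifies the relevant block-diagonal contribution with $K\overline{R}K^\top$, and then submultiplicativity together with the definition of $a_2$ yields $a_2\|\overline{R}\|_2$. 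The triangle inequality then gives $\|\Phi\|_2\le a_1N\|E\|_2+a_2\|\overline{R}\|_2$.

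Third, I would iterate the Lyapunov recursion to write $Q_k=\widehat{A}^k Q_0(\widehat{A}^\top)^k+\sum_{i=0}^{k-1}\widehat{A}^i\Phi(\widehat{A}^\top)^i$. Since $K$ is designed via the LMI~\eqref{eq_lmi} so that $\rho(\widehat{A})<1$, the transient term vanishes as $k\to\infty$ and the series converges; taking the 2-norm of both sides and summing the resulting geometric series in $b$ (the contraction constant of $\widehat{A}$) produces the claimed steady-state bound~\eqref{eq_pinfty}.

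The step I expect to be most delicate is matching the geometric ratio $b^2$ with the Euclidean operator norm $\|\widehat{A}\|_2^2$: the LMI~\eqref{eq_lmi} only certifies $\rho(\widehat{A})<1$, not $\|\widehat{A}\|_2<1$. The standard remedy is to adopt the weighted norm induced by the Lyapunov certificate $X$ from~\eqref{eq_lmi}, in which $\widehat{A}$ is strictly contractive, and absorb the condition number of $X$ into the implicit constants; equivalently, Gelfand's formula gives $\|\widehat{A}^i\|_2\le M(\rho')^i$ for any $\rho'\in(\rho(\widehat{A}),1)$ and some finite $M$, after which the bound retains the same form as~\eqref{eq_pinfty}.
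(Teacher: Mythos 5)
Your proposal is correct and follows essentially the same route as the paper: the paper likewise splits $\Phi$ into the process-noise and measurement-noise contributions, uses $\|\mathbf{1}_{NN}\otimes E\|_2=N\|E\|_2$ and submultiplicativity to obtain $\|\Phi\|_2\le a_1N\|E\|_2+a_2\|\overline{R}\|_2$, and then applies the geometric-series bound $\|Q_{\infty}\|_2\le \|\Phi\|_2/(1-b^2)$ with $b\triangleq\|\widehat{A}\|_2$, citing \cite{khan2014collaborative} rather than deriving the Lyapunov recursion explicitly as you do. Your closing caveat is well taken: the paper's bound tacitly requires $b=\|\widehat{A}\|_2<1$, which is stronger than the $\rho(\widehat{A})<1$ certified by the LMI design, so your weighted-norm/Gelfand remedy is what a fully rigorous argument would need (at the cost of an extra constant that would alter the exact form of \eqref{eq_pinfty}).
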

\begin{proof}
	Following \cite{khan2014collaborative} with ${ \|\widehat{A}\|_2 \triangleq b }$,
	\begin{align} \label{eq_pinfty1}
		\|Q_{\infty}\|_2  \leq \frac{\|\Phi\|_2}{1-b^2}.
	\end{align}
	From \eqref{eq_eta} we have,
	\begin{align} \nonumber
		\pmb{\eta}_k\pmb{\eta}^\top_k &= (I_{Nn}- K D_C)(\mathbf{1}_{NN} \otimes \pmb{\nu}_{k-1}\pmb{\nu}_{k-1}^\top)(I_{Nn}- K D_C)^\top\\ &+ (K\overline{D}_C) \pmb{\zeta}_k\pmb{\zeta}_k^\top (K\overline{D}_C)^\top.
	\end{align}
	Then, from~\eqref{eq_eta},
	\begin{align} \nonumber
		\|\Phi\|_2 &\leq \|(I_{Nn}- K D_C)(\mathbf{1}_{NN} \otimes E)(I_{Nn}- K D_C)^\top\|_2\\ \nonumber
		&+ \|(K\overline{D}_C) R (K\overline{D}_C)^\top\|_2.
	\end{align}
	Using the fact that ${\|\mathbf{1}_{NN} \otimes E\|_2=N\|E\|_2}$,
	\begin{align}
		\|\Phi\|_2 \leq \|I_{Nn}- K D_C\|_2^2 N \|E\|_2 + \|K\|_2^2 \|\overline{R}\|_2,
	\end{align}
	and applying equation \eqref{eq_pinfty1} results in \eqref{eq_pinfty}.
\end{proof}
In fact, Lemma~\ref{lem_Ee} implies that the estimator \eqref{eq_p}-\eqref{eq_m} is unbiased in the absence of attacks, while Lemma~\ref{lem_Ee2} states that its mean-square estimation error (also known as mean-square deviation \cite{cattivelli2008diffusion}) is bounded in steady-state.

\section{Main Algorithm}\label{sec_attack}
We now describe the attack detection logic. 
Define the residual at every agent $i$  as the absolute difference value  between the original output ${y}_k^i$ and the estimated output, 
\begin{align} 
r_k^i &\triangleq |{y}_k^i-\widehat{{y}}_{k}^i|
= |\mb{c}^\top_i \widehat{A}_i \mb{e}_{k-1}+\mb{c}^\top_i \pmb{\eta}_k^i+{\zeta}^i_k+\tau^i_k|.
\label{eq_r}
\end{align}
Note that the residual defined above based on the absolute-value is a standard definition, which is irrespective of the attack being positive (${\tau^i_{k}>0}$) or negative (${\tau^i_{k}<0}$) and works for both sign-preserving and sign-changing attacks. As shown in Lemmas~\ref{lem_Ee} and~\ref{lem_Ee2}, in the attack-free case with ${\tau^i_{k}=0}$, the estimation error $\mb{e}_k^i$, and therefore, the residual $r_k^i$ is bounded steady-state stable and unbiased at all agents. Note that in general ${\widehat{A}_i \mb{e}_{k-1} \rightarrow 0}$  due to Schur stability of $\widehat{A}$, while the second term in \eqref{eq_r} is,
\begin{align} 
\mb{c}^\top_i\pmb{\eta}_{k}^i =  \mb{c}^\top_i\pmb{\nu}_{k-1} -\mb{c}^\top_iK_i\sum_{j\in \mathcal{N}_\alpha(i)} \Bigl(\mb{c}_j \zeta^j_k +
\mb{c}_j {\tau}^j_{k}+ \mb{c}_j \mb{c}^\top_j\pmb{\nu}_{k-1} \Bigr). \label{eq_r_eta}
\end{align}
In case of an attack on  agent $i$, i.e.,~${\tau^i_{k}\neq0}$, the term $\mb{c}^\top_i\pmb{\eta}_{k}^i$ is biased at agent~$i$. This biased residual can be used to find (isolate) the attacked agent. In this sense,  first, we need to define a threshold on the residuals to distinguish the effect of noise terms (in absence of attacks) and the biasing attacks.

\subsection{Probabilistic Threshold Design}
Here, the probabilistic detection thresholds  are defined based on ${ Q_{\infty}}$ in \eqref{eq_pinfty}. For each agent define, 
	\begin{align} \label{eq_theta1}
	 \frac{	\|Q_{\infty}\|_2 }{N} \leq \frac{a_1N\|E\|_2+a_2 \|\overline{R}\|_2}{N(1-b^2)}  \teL \Theta_1.
\end{align}
Then, for specific false alarm rates  and attack detection probabilities $\kappa$, one can consider different detection-levels ${m \in \mathbb{R}_{>0}}$ as described in Fig.~\ref{fig_normal_dist}. A detection-level $m$ represents a specific probability threshold $\kappa$ associated with the Gaussian PDF of the estimation error in the attack-free case.
Then,  the thresholds $\theta_\kappa$ are designed as follows.
\begin{lem} \label{lem_threshold}
Following the assumptions in Section~\ref{sec_ass},  given the noise covariance $R$ and $E$ and the residuals $ r_k^i$ from Eq. \eqref{eq_r}, the attack detection threshold for a  detection-level ${m \in \mathbb{R}_{>0}}$ is,
\begin{align} \label{eq_thresold}
\theta_\kappa \coloneqq m\Theta_2^i,~\Theta_2^i \coloneqq |\mb{c}^\top_i|\Theta_1 + R_{ii}
\end{align}
where ${\kappa = \mbox{erf}(\frac{m}{\sqrt{2}})}$
is detection probability (with $\mbox{erf}(\cdot)$ as the \textit{Gauss error function}), $\mb{c}_i$ is the measurement column-vector at agent $i$, and $\Theta_1$ follows \eqref{eq_theta1}.
\end{lem}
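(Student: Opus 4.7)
The plan is to show that, under the attack-free hypothesis $\tau^i_k=0$, the residual $r_k^i$ is asymptotically the absolute value of a zero-mean Gaussian whose standard deviation admits $\Theta_2^i$ as an upper bound, and then to read off the threshold from the classical Gaussian tail identity $\Pr(|X|\le m\sigma)=\mathrm{erf}(m/\sqrt{2})$ for $X\sim\mathcal{N}(0,\sigma^2)$.

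First, I specialize \eqref{eq_r} to the attack-free case, giving $r_k^i=\bigl|\mathbf{c}_i^\top\widehat{A}_i\mathbf{e}_{k-1}+\mathbf{c}_i^\top\pmb{\eta}_k^i+\zeta_k^i\bigr|$. Since the driving noises $\pmb{\nu}_k,\pmb{\zeta}_k$ are jointly Gaussian (Assumption~(ii) of Section~\ref{sec_ass}) and the error dynamics \eqref{eq_err1}--\eqref{eq_eta} are linear with Schur state matrix $\widehat{A}$ (Section~\ref{sec_err_stab}), the inner argument is Gaussian at every $k$ and its distribution converges as $k\to\infty$. Its mean vanishes because Lemma~\ref{lem_Ee} gives $\mathbb{E}(\mathbf{e}_\infty)=\mathbf{0}$ together with $\mathbb{E}(\pmb{\eta}_k)=\mathbf{0}$ and $\mathbb{E}(\pmb{\zeta}_k)=\mathbf{0}$, so the residual is (asymptotically) the absolute value of a zero-mean Gaussian.

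Next, I bound the steady-state standard deviation. Writing $\sigma_i^2\coloneqq\mathrm{Var}(\mathbf{c}_i^\top\widehat{A}_i\mathbf{e}_{k-1}+\mathbf{c}_i^\top\pmb{\eta}_k^i+\zeta_k^i)$ and expanding the second moment in terms of the global error covariance $Q_\infty$, the cross-term between $\zeta_k^i$ and the error/noise propagated via $\mathbf{e}_{k-1}$ and $\pmb{\nu}_{k-1}$ vanishes by independence. The propagated-estimate contribution is controlled by extracting the $i$-th diagonal block of $Q_\infty$ and invoking the spectral-norm bound $\|Q_\infty\|_2\le N\Theta_1$ provided by Lemma~\ref{lem_Ee2} together with \eqref{eq_theta1}, which through a Cauchy--Schwarz-type argument on the quadratic form $\mathbf{c}_i^\top(\cdot)\mathbf{c}_i$ yields an upper bound $|\mathbf{c}_i|\Theta_1$; the independent measurement noise $\zeta_k^i$ contributes $R_{ii}$ directly. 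Summing gives $\sigma_i\le\Theta_2^i=|\mathbf{c}_i^\top|\Theta_1+R_{ii}$.

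Finally, applying $\Pr(|X|\le m\sigma)=\mathrm{erf}(m/\sqrt{2})$ with $\sigma\le\Theta_2^i$ delivers $\Pr(r^i_\infty\le m\Theta_2^i)\ge\mathrm{erf}(m/\sqrt{2})$, so the choice $\theta_\kappa\coloneqq m\Theta_2^i$ realises the advertised detection probability $\kappa=\mathrm{erf}(m/\sqrt{2})$. The main obstacle I anticipate is the per-agent extraction in the middle step: Lemma~\ref{lem_Ee2} only controls $Q_\infty$ in spectral norm, whereas the scalar $\mathbf{c}_i^\top\mathbb{E}(\mathbf{e}^i_\infty(\mathbf{e}^i_\infty)^\top)\mathbf{c}_i$ concerns a single diagonal block; obtaining the stated $|\mathbf{c}_i|\Theta_1$ bound relies precisely on the normalisation by $N$ in the definition of $\Theta_1$ to distribute the collective covariance budget across the $N$ agents, and verifying that this distribution is tight (rather than pessimistic) is the delicate part of the argument.
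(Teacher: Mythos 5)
Your proposal follows essentially the same route as the paper's proof: establish that in the attack-free case the residual is (asymptotically) the absolute value of a zero-mean Gaussian, via Lemmas~\ref{lem_Ee} and~\ref{lem_Ee2} together with linearity of the error dynamics~\eqref{eq_err1}--\eqref{eq_eta} and the Gaussianity of $\pmb{\nu}_k,\pmb{\zeta}_k$, and then invoke the tail identity $\Pr(|X|\le m\sigma)=\mathrm{erf}(m/\sqrt{2})$ to read off $\kappa$. The paper additionally frames the decision rule as a binary hypothesis test between two equal-variance Gaussians (attack-free vs.\ mean-shifted), which you omit but which does not change the threshold derivation itself.

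The ``delicate part'' you flag at the end is a genuine gap, and you should know that the paper's own proof does not close it: it simply asserts that $\Theta_2^i$ ``is the residual variance'' without deriving it from $Q_\infty$. Concretely, two things remain unjustified on both your route and the paper's. First, $\|Q_\infty\|_2/N$ in \eqref{eq_theta1} is not an upper bound on the spectral norm of the $i$-th diagonal block of $Q_\infty$; each block satisfies only $\|Q_\infty^{ii}\|_2\le\|Q_\infty\|_2$, so the division by $N$ is a heuristic per-agent allocation of the collective covariance budget rather than an inequality, and moreover a Cauchy--Schwarz bound on $\mathbf{c}_i^\top Q_\infty^{ii}\mathbf{c}_i$ yields a factor $|\mathbf{c}_i|^2$, not the $|\mathbf{c}_i^\top|$ appearing in \eqref{eq_thresold}. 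Second, $\Theta_2^i$ is assembled from variance-scaled quantities ($\Theta_1$ bounds a covariance norm and $R_{ii}$ is a variance), whereas the identity $\kappa=\mathrm{erf}(m/\sqrt{2})$ requires the threshold to be $m$ times a \emph{standard deviation}; your claim $\sigma_i\le\Theta_2^i$ therefore needs a square root somewhere, or the stated $\kappa$ is only nominal. Neither issue is introduced by you --- both are inherited from the statement --- but your proof as written cannot complete the middle step, and you will not find the missing argument in the paper.
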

\begin{proof}
The proof directly follows from Lemma~\ref{lem_Ee} and~\ref{lem_Ee2} and the results in \cite{khan2014collaborative}. 
From Lemma~\ref{lem_Ee} and~\ref{lem_Ee2}, ${\mathbb{E}(\mb{e}_k^i)=\mb{0}}$  for attack-free case, and following the zero-mean Gaussian distribution of the noise terms in $\pmb{\eta}_k$ (including ${\pmb{\nu}_{k}}$  and $\pmb{\zeta}_k$) and linearity of the  error dynamics~\eqref{eq_err1}-\eqref{eq_eta} and the protocol \eqref{eq_p}-\eqref{eq_m}, it is straightforward to see that $\mb{e}_k^i$ and $r_k^i$ are Gaussian; see details in \cite{khan2014collaborative}.
Then, from standard textbooks on Gaussian distribution (e.g., \cite{krishnamoorthy2016handbook}) 
and Eq.~\eqref{eq_r} in attack-free case, the probability of ${|r_k^i|\leq m \Theta_2^i}$  with ${\Theta_2^i = |\mb{c}^\top_i|\Theta_1+R_{ii}}$  is determined via the value of the normal deviate less than $m\Theta_2^i$, i.e., ${\kappa = \mbox{erf}(\frac{m}{\sqrt{2}})}$. Recall that $\Theta_2^i$ is the residual variance and $R_{ii}$ is the measurement noise variance at agent $i$. Then, in presence of attack, both error $\mb{e}_k^i$ and residual $r_k^i$   are biased by some products of $\tau_k^i \neq 0$ (due to linearity). 
In this case, the residual follows a biased Gaussian distribution with non-zero mean. Following statistical hypothesis testing for the two Gaussian distributions with equal variance (assuming equally likely a-priori hypothesis), if the residual $|r_k^i|$  is greater than $m\Theta_2^i$ then the probability of attack is ${\kappa}$ and probability of false alarm is ${1-\kappa}$. This justifies the probability thresholds  ${\theta_\kappa \triangleq m\Theta_2^i}$ (as illustrated in Fig.~\ref{fig_normal_dist}) and completes the proof. 
\end{proof}
\begin{figure}[t]
	\centering
	\includegraphics[width=3.1in]{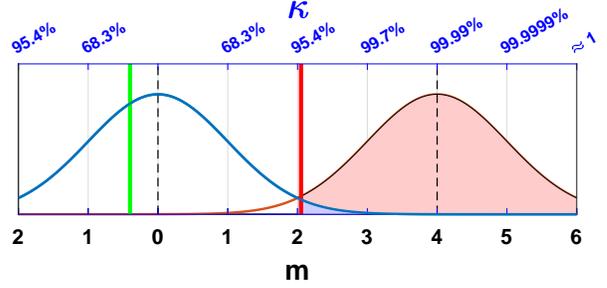} \vspace{-0.3cm}
	\caption{This figure illustrates the attack detection logic in Lemma~\ref{lem_threshold}. The confidence intervals for the normalized residual in the absence of attack (blue curve) are shown. Each value of $m$ in Eq. \eqref{eq_thresold} associated with a confidence interval represents a probability threshold $\kappa$ associated with the Gaussian PDF of the residual.  As an example, the red and green lines represent two  normalized residual values $\frac{r_k}{\Theta_2}$ via Eq.~\eqref{eq_r} and \eqref{eq_thresold}. Following the binary hypothesis testing (maximum-likelihood case), the threshold on the residual is the intersection (midpoint) of the two PDFs, where the residual belongs to the PDF in the presence of attack (red curve). Since the residual is over the threshold $\theta_\kappa$ with $m=2$ (${{r_k}>2{\Theta_2}}$), probability of attack is more than ${\kappa = 95.4\%}$. This probability is equal to the red shaded area (since both PDFs follow the same normal distribution), while the probability of false alarm ($1-\kappa=4.6\%$) is shaded by blue. Clearly, this gives the highest probability of detection, while for higher threshold values (larger $\kappa$) the residual is not detected as biased/attacked. For the green residual with ${|r_k|<{\Theta_2}}$, the residual is most likely due to (system/measurement) noise, which is also evident from the (blue) PDF. Recall that, in general, $m$ may take (positive) real values over infinite range ($m \rightarrow \infty$).}  
	\label{fig_normal_dist}
\end{figure}
The parameter $m$ in \eqref{eq_thresold} and Lemma~\ref{lem_threshold} can take any real (or integer) value in $\mathbb{R}_{>0}$. Some typical threshold probability values $\kappa$  for integer values of $m$ are given in Table~\ref{tab_kappa}. Clearly, higher values of $m$ (and $\kappa$) implies lower false alarm rates.
\begin{table} [h] 
		\centering
		\caption{Different threshold probabilities $\kappa$ for integer $m$ in Eq. \eqref{eq_thresold}. }
		\label{tab_kappa}
		\begin{tabular}{|c|c|c|c|c|} 
			\hline
			$\frac{m}{2}$& $1$ & $2$ & $3$ & $4$  \\
			\hline
			Threshold probability $\kappa$ &  $68.3\%$ & $95.4\%$ & $99.7\%$  & $99.99\%$ \\
			\hline
			\hline
		\end{tabular}
\end{table}
\begin{rem} \label{rem_erf}
   A straightforward sequel to Lemma~\ref{lem_threshold} is that one can design the threshold $\theta_\kappa$ for a given false-alarm rate $\varkappa = 1-\kappa$ as ${\theta_\kappa = \sqrt{2}\mbox{erf}^{-1}(\kappa)\Theta_2^i}$.
\end{rem}
\begin{rem} \label{rem_falserate}
   The magnitude of the residual $r_k^i$ is tightly related to the magnitude of the biasing attack $\tau_k^i$. In other words, greater measurement bias $\tau_k^i$ results in greater residual $r_k^i$ exceeding the threshold $\theta_\kappa$ with higher attack probability $\kappa$ and lower probability of false alarm  ${\varkappa = 1-\kappa}$. 
\end{rem}

Recall from Remark~\ref{rem_noise} that, unlike \cite{kim2018detection,pajic2015attack,chong2015observability,lee2015secure,shoukry2017secure} considering a fixed (deterministic) threshold based on the upper bound on $\pmb{\zeta}_k$, Eq. \eqref{eq_thresold} assigns probability $\kappa$ to the threshold $\theta_\kappa$ with no such upper bound assumption on the noise terms, implying the probabilistic threshold design.

\subsection{Attack Detection and Mitigation Logic} \label{sec_main_alg}
Recall that, following  Lemma~\ref{lem_kron}, the connectivity of the $\alpha$, $\beta$, and $\gamma$-agents over $\mc{G}_\alpha$ and $\mc{G}_\beta$ results in the next lemma.
\begin{lem} \label{lem_isolate}
    Following the connectivity condition in Lemma~\ref{lem_kron} and residual formulations in \eqref{eq_r}-\eqref{eq_r_eta},
    \begin{enumerate} [(i)]
        \item In case of having no $\alpha$-agent\footnote{Number of $\alpha$-agents is equal to the rank-deficiency of the system matrix $A$ \cite{icassp13}. Therefore, for a full-rank system the associated distributed estimator has no $\alpha$-agent \cite{TCNS2020}.  },  attack at any $\beta$ or $\gamma$-agent is isolated.
        \item For isolation of attack in presence of an $\alpha$-agent $j$, the gain matrix $K$ needs to satisfy,
\begin{align} \label{eq_Kalpha}
\left |\frac{\mb{c}^\top_i K_i \mb{c}_j}{\mb{c}^\top_j K_j \mb{c}_j-1} \right | \leq \epsilon, ~ \mbox{for}~ i \neq j ,
\end{align}
where $0\leq {\epsilon < 1}$ is a pre-specified constant determining the residual ratio.
    \end{enumerate}
\end{lem}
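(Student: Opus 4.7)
The plan is to trace how a biasing attack $\tau_k^j$ at agent $j$ propagates into the residual $r_k^i$ of every other agent via the substitution of Eq.~\eqref{eq_r_eta} into Eq.~\eqref{eq_r}. By Lemmas~\ref{lem_Ee} and~\ref{lem_Ee2}, in the attack-free regime the residuals lie below $\theta_\kappa$ with probability $\kappa$, so isolating the attacked agent reduces to identifying which residuals are biased by $\tau_k^j$ and, in particular, ensuring that only $r_k^j$ crosses the threshold.

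For part~(i), I would isolate the $\tau$-dependent part of $r_k^i$. The attack $\tau_k^j$ enters $r_k^i$ through the direct measurement term (only when $i=j$) and through the neighborhood sum $-\mb{c}_i^\top K_i \sum_{\ell\in \mc{N}_\alpha(i)} \mb{c}_\ell \tau_k^\ell$. By Lemma~\ref{lem_kron}, only $\alpha$-agents are hubs of $\mc{G}_\alpha$, so when $j$ is a $\beta$- or $\gamma$-agent we have $j\notin \mc{N}_\alpha(i)$ for every $i\neq j$. Hence $\tau_k^j$ leaves $r_k^i$ unchanged for $i\neq j$ (statistically identical to the attack-free case), while the coefficient of $\tau_k^j$ at the attacked agent is $1-\mb{c}_j^\top K_j\mb{c}_j$, producing a biased residual there and yielding isolation.

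For part~(ii), the $\alpha$-hub property now plays against us: since the attacked $\alpha$-agent $j$ is linked to every other agent in $\mc{G}_\alpha$, we have $j\in \mc{N}_\alpha(i)$ for all $i$, and the attack coefficient becomes $-\mb{c}_i^\top K_i \mb{c}_j$ at any neighbor $i\neq j$, versus $1-\mb{c}_j^\top K_j\mb{c}_j$ at agent $j$ itself. For the biased neighbor residuals to remain inside the attack-free confidence interval while $r_k^j$ still crosses $\theta_\kappa$, the relative attack-gain must satisfy the ratio bound in Eq.~\eqref{eq_Kalpha}; the admissible slack $\epsilon<1$ then guarantees that the noise-induced and attack-induced shifts at $i\neq j$ stay below threshold by the same probabilistic argument used in Lemma~\ref{lem_threshold}.

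The main obstacle I anticipate is not the algebraic derivation of~\eqref{eq_Kalpha} but arguing that such a block-diagonal $K$ is simultaneously feasible with the Schur condition $\rho(\widehat{A})<1$ from Section~\ref{sec_err_stab}: one must show that the constraint~\eqref{eq_Kalpha}, which is affine in the diagonal blocks $K_i$, can be appended to the cone-complementarity LMI~\eqref{eq_lmi} without destroying feasibility. A secondary subtlety is that the attack-free noise contributions in $r_k^i$ and $r_k^j$ are correlated through the shared realization of $\pmb{\nu}_{k-1}$ and $\mb{e}_{k-1}$, so the argument should be phrased as a deterministic shift of the mean (handled by~\eqref{eq_Kalpha}) superposed on a common zero-mean Gaussian background whose variance is already bounded by Lemma~\ref{lem_Ee2}.
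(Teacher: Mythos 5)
Your argument matches the paper's proof: for part (i) you use exactly the fact that a $\beta$/$\gamma$-agent's measurement is not shared over $\mc{G}_\alpha$, so $\tau_k^j$ enters only $r_k^j$ with coefficient $1-\mb{c}_j^\top K_j\mb{c}_j$, and for part (ii) you identify the same two attack coefficients $-\mb{c}_i^\top K_i\mb{c}_j$ (at $i\neq j$) and $\mb{c}_j^\top K_j\mb{c}_j-1$ (at $j$) and read \eqref{eq_Kalpha} as the condition forcing $|r_k^j/r_k^i|>1/\epsilon$. The two caveats you flag --- joint feasibility of \eqref{eq_Kalpha} with the LMI \eqref{eq_lmi}, and the fact that the attack shift rides on a correlated zero-mean noise background --- are legitimate refinements that the paper's proof does not address, but the core derivation is the same.
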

\begin{proof}
From Lemma~\ref{lem_kron}, in absence of any $\alpha$-agent, ${\mathcal{N}_\alpha(i)=\{i\}}$ for any agent $i$ of type $\beta$ and $\gamma$. Thus, from  \eqref{eq_r}-\eqref{eq_r_eta}, 
biasing attack ${\tau^i_k \neq 0}$  at a $\beta$ or $\gamma$-agent $i$ only affects the residual $r_k^i$. This implies that $r^i_k$ is biased while $r^j_k$ (${j\neq i}$) is unbiased, implying that attack  $\tau^i_k$ is isolated at any $\beta$/$\gamma$-agent. On the other hand, in the presence of an $\alpha$-agent $j$ subject to attack $\tau_k^j\neq 0$, Eq. \eqref{eq_r}-\eqref{eq_r_eta} implies that the residual $r_k^i$ at every agent $i$ is affected  by the attack at agent ${j \in \mathcal{N}_\alpha(i)}$ via the term $\mb{c}^\top_i K_i \mb{c}_j$, while  the residual $r_k^j$ at $\alpha$-agent $j$ 
is affected by the factor ${{\mb{c}^\top_j K_j \mb{c}_j-1}}$. 
Therefore, Eq. \eqref{eq_Kalpha} ensures that ${|\frac{r_k^j}{r_k^i}|>\frac{1}{\epsilon} > 1}$ (for ${i \neq j}$), implying greater residual at  $\alpha$-agent $j$ by factor $\frac{1}{\epsilon}$. This constraint ensures that the  attack  can  be isolated at every $\alpha$-agent $j$. 
\end{proof}
Following Lemma~\ref{lem_threshold} and~\ref{lem_isolate}, for the attacked agent $i$ (of any type) the residual $r_k^i$ is (more) biased over  $\theta_\kappa$ in \eqref{eq_thresold}, while the residuals at other agents are less biased (or unbiased). Largest $\kappa$ such that ${r_k^i\geq \theta_\kappa}$ declares the probability of attack (or probability of false alarm  ${1-\kappa}$). Likewise, from Remark~\ref{rem_erf} and~\ref{rem_falserate}, the attack detection logic can be designed for a \textit{given} false alarm rate ${\varkappa_i}$ (and probabilistic  threshold $\theta_{\kappa_i}$) at  sensor $i$. Then, similar to the deterministic case,
the following hypothesis testing locally declares ``Attack`` or ``No-Attack`` at sensor $i$ (under certain false alarm rate ${\varkappa_i}$),
\begin{equation}
  \text{If}~ \left\{
  \begin{array}{@{}l}
     r_{k}^i\geq \theta_{\kappa_i} \\
     r_{k}^i<\theta_{\kappa_i} 
  \end{array}\right. ~\text{Then}~\left\{
  \begin{array}{@{}l}
     \mc{H}^i_1: \text{Attack Detected} \\
     \mc{H}^i_0: \text{No Attack} 
  \end{array}\right. 
\end{equation} 
\begin{rem} \label{rem_consistency}
A relevant concept is \textit{nodal/local consistency} of measurement/prediction information (data) set at agent $i$ and $j \in \mc{N}_{\alpha}(i) \cup \mc{N}_\beta(i)$ at every time $k$, denoted by  $\mc{I}_k^i,\mc{I}_k^j$ \cite{khan2013secure}.
Recall that nodal consistency checks the \textit{statistical consistency} of  $\mc{I}_{k}^i$  with the information $\mc{I}_{[k-T,k]}^i$ over a sliding time-window $T$, declaring that $\mc{I}_{k}^i$ is trustable or not. In this direction, one can track the information  over such time-window $T$  and apply, for example, a \textit{chi-square detector} on the residuals over $T$ \cite{icas21_attack} instead of instantaneous residuals \eqref{eq_r}. Local consistency, on the other hand, checks the \textit{statistical consistency} of the common information (e.g., on the shared observable subspace) between $\mc{I}_{[k-T,k]}^i$ and received information $\mc{I}_{[k-T,k]}^j$, $j \in \mc{N}_{\alpha}(i) \cup \mc{N}_\beta(i)$, and declares if $\mc{I}_{k}^j$ is trustable or not. Note that for (necessary) $\alpha$/$\beta$-agents,  weak local consistencies imply certain loss of  observability information and degradation of estimation performance. 
\end{rem}
\begin{rem} \label{rem_mitigate}
    (\textbf{Attack mitigation}) From Section~\ref{sec_necessary}, $\alpha$/$\beta$-agents  are necessary for observability; therefore, in case of attacks, their erroneous information of their observable subsystems makes those subsystems unobservable to all agents, causing unstable estimation error. To recover the loss of observability, recall that the states in the same parent SCC $\mc{S}^p_l$ and in the same contraction $\mc{C}_l$ are \textit{observationally-equivalent}, in the sense that measurement of two states in $\mc{S}^p_l$ or in $\mc{C}_l$  provide information on the same observable subsystem. In other words, the information $\mc{I}_k^i,\mc{I}_k^j$ offered by two state measurements (agents~$i,j$) are said to be observationally-equivalent if they equally contribute to the rank recovery of the \textit{observability Gramian} (see detailed definition in \cite{icassp2016,tnse18}). In this regard, for attack mitigation, the biased measurement can be replaced with a new measurement of an observationally-equivalent state in $\mc{S}^p_l$ or $\mc{C}_l$. Note that, after mitigating the attacks, the performance analysis follows as in Section~\ref{sec_perf}.
\end{rem}
\begin{rem} \label{rem_cost}
    (\textbf{Cost-optimal mitigation}) 
    Given an observationally-equivalent set of state nodes $\mc{S}^p_l$  or $\mc{C}_l$,  the substitute/replacement state measurement can be chosen based on its sensing cost. Combinatorial optimization strategies \cite{TNSE19}, e.g., the well-known \textit{Hungarian} algorithm, can be adopted to find the minimal-cost equivalent measurement to reduce the overall sensing cost. Similar arguments hold for  cost-optimal design of the multi-agent network  $\mc{G}_N= \mc{G}_\alpha \cup \mc{G}_\beta$, e.g., using the  so-called \textit{minimum spanning strong sub-graph} algorithm \cite{spl18}. 
\end{rem}
Remark~\ref{rem_mitigate} along with  Lemma~\ref{lem_threshold} and~\ref{lem_isolate} result in 
Algorithm~\ref{alg_1}.
\begin{algorithm} [t] \label{alg_1}
	\textbf{Input:} System digraph $\mc{G}_A$ and its contractions $\mc{C}$ and parent SCCs $\mc{S}^p$, $\alpha/\beta/\gamma$ classification, local estimate ${\widehat{\mb{x}}^i_{k|k}}$ at every agent ${i\in \{1,...,N\}}$.
	
    \textbf{Initialization:} $\widehat{\mb{x}}^i_{0|0}$ is set randomly at all agents $i$
    \\
    Every agent $i$ does the following:
    
    Finds the thresholds $\theta_\kappa$ based on Eq.  \eqref{eq_thresold}\;
    Finds $\widehat{\mb{x}}^i_{k|k-1}$ and $\widehat{\mb{x}}^i_{k|k}$ for $k\geq 1$ via Eq. \eqref{eq_p}-\eqref{eq_m}\;
	Finds the residual $r_k^i$ via \eqref{eq_r}\;
	\If{$r_{k>k_a}^i> \theta_\kappa$}{
		\textbf{Declares:} attack detected with probability $\kappa$\;
		\If{agent $i$ is type $\alpha$}{
			Substitutes new agent $i'$ with output from another state in the same contraction~$\mc{C}_l$\;}
		\Else{\If{agent $i$ is type $\beta$}{
				Substitutes  new agent $i'$ with output from another state in the same parent SCC~$\mc{S}^p_l$ \;}
			\Else{Remove $\gamma$-agent $i$ with no substitution\;}	
		}
	}
	\textbf{Output} Attack probability $\kappa$ and substitute agent $i'$\;	
	\caption{Attack Detection and Mitigation}
\end{algorithm}
Note that the terms $D_C$ in  \eqref{eq_err1} and $\overline{R}$ in \eqref{eq_theta1} are defined locally, i.e., 
the $i$-th diagonal block of $D_C$ and $\overline{R}$ related to agent $i$ are defined based on  received measurement information $\mb{c}_j$ and $R_j$ from  its direct neighbors (summation is over ${j\in \mathcal{N}_\alpha(i)}$). 
Therefore, the calculations of these terms are distributed and localized over the network.
The thresholds $\theta_\kappa$ in \eqref{eq_thresold}, agent types, and the sets of observationally-equivalent states in the system digraph $\mc{G}_A$ are determined by a central entity once off-line, then, broadcasted and transmitted to every agent. This procedure is done once 
and the information is stored at all agents; then, the agents can perform estimation and detect the attack locally with no further role of the centralized entity. 
See similar assumptions in	\cite{usman_cdc:11,khan2014collaborative} for distributed estimation/filtering.
\begin{rem} \label{rem_order}
	The  DM (Dulmage-Mendelsohn) decomposition  and DFS  (depth-first-search) or Kosaraju-Sharir algorithms  can be used, respectively,  to find contractions and SCCs (along with their topological order)
	with computational complexity $\mc{O}(n^{2.5})$ and $\mc{O}(n^2)$ \cite{murota}. 
	The residual calculation at agents is of $\mc{O}(n)$ complexity, while the complexity of the threshold design based on $2$-norm calculation is $\mc{O}(N^3n^3)$. Overall, the complexity of Algorithm~\ref{alg_1} is $\mc{O}(N^3n^3)$. 
	This polynomial order complexity suits large-scale applications.    
\end{rem}
\section{Simulation}\label{sec_sim}
For simulation we consider a dynamical system with $10$ states associated with the  system digraph $\mc{G}_A$  in Fig. \ref{fig_sim1}-(Left). The link weights in $\mc{G}_A$   are considered randomly (such that $\rho(A)>1$).
\begin{figure}[t]
	\centering
	\includegraphics[width=1.7in]{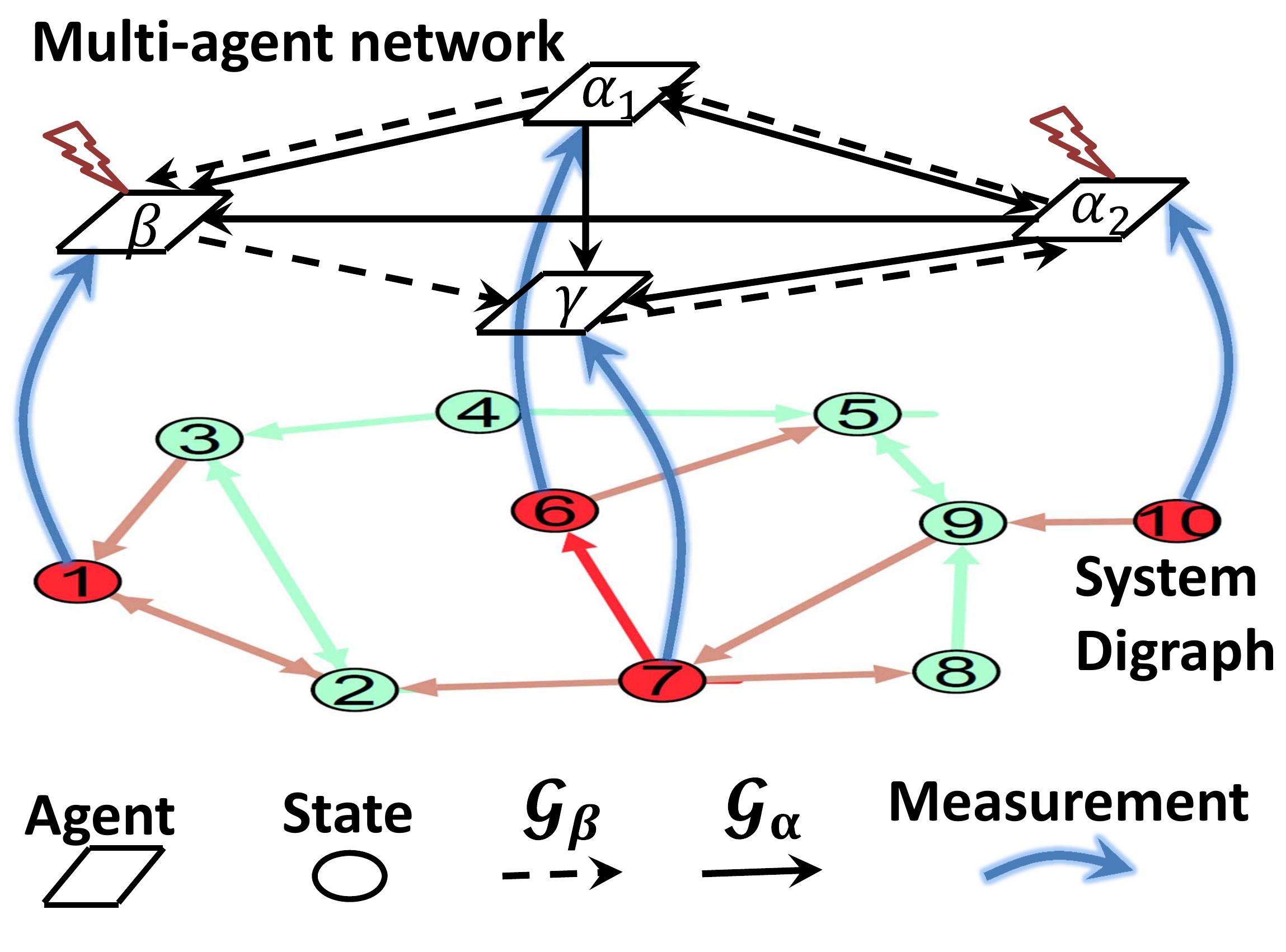}
	\includegraphics[width=1.6in] {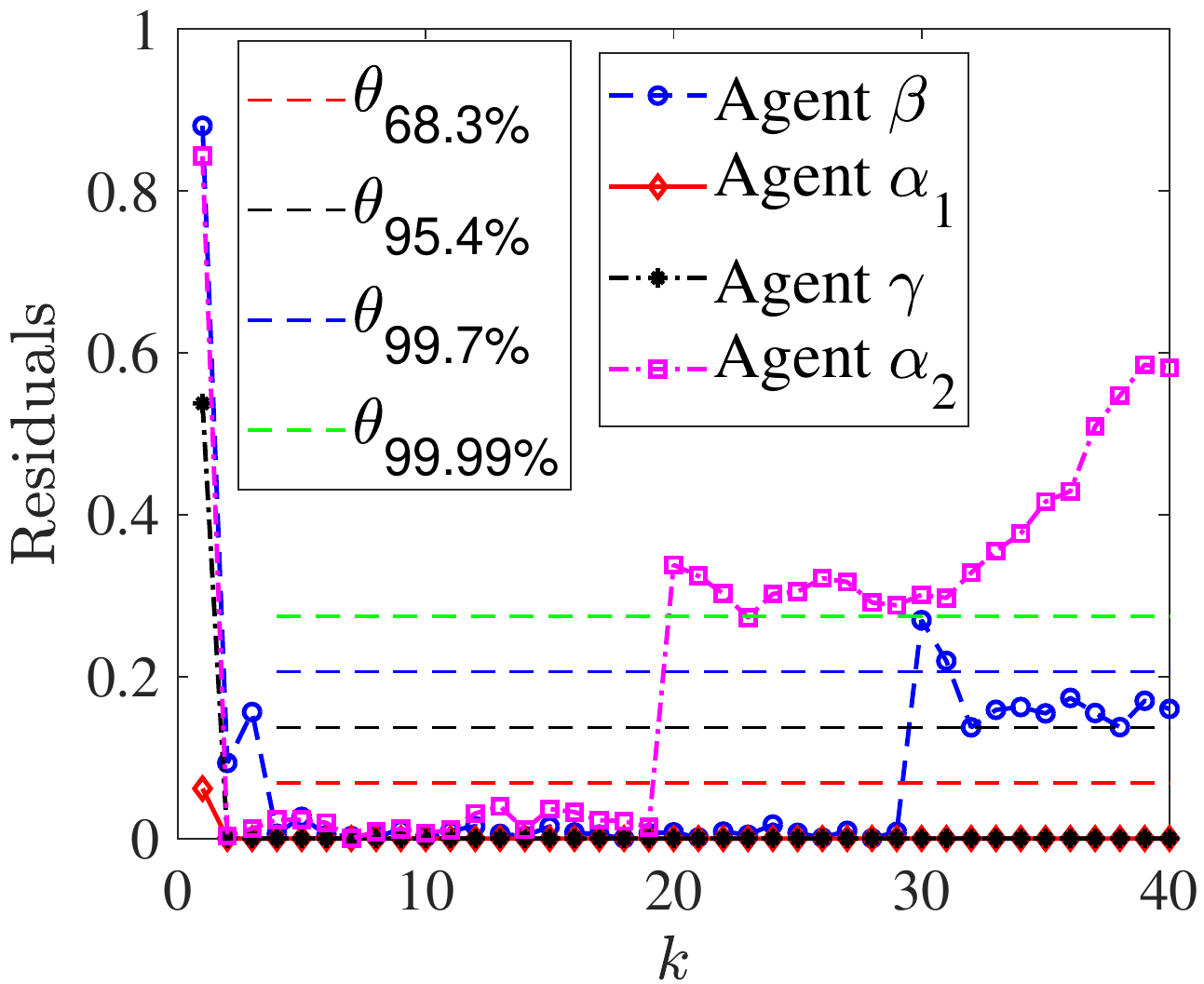}
	\caption{(Left) The multi-agent network (top network including $\mc{G}_\alpha$ and $\mc{G}_\beta$)  estimates the states of the dynamical system (bottom network) by taking output measurements of the  states  in red color.  The agents $\alpha_2$ and $\beta$ are under attack. (Right) The residuals at $4$ agents over time along with the thresholds $\theta_\kappa$ are shown. As expected, the residuals at the attacked agents are biased over the thresholds. } \vspace{-0.4pt}
	\label{fig_sim1}
\end{figure}
Following Remark~\ref{rem_order}, the contractions and parent SCCs in $\mc{G}_A$ are: ${\mc{S}^p_1=\{1,2,3\}}$, ${\mc{C}_1 = \{5,6,8\}}$, and ${\mc{C}_2= \{8,10\}}$. From Section~\ref{sec_necessary}, one output from each of these node sets ensure observability of $\mc{G}_A$. As shown in Fig.\ref{fig_sim1}-(Left), agents $\beta$, $\alpha_1$, and $\alpha_2$  take output of state $1$, $6$, and $10$, respectively, along with a redundant agent $\gamma$ with output of state $7$ (which is not necessary for observability). Following Section~\ref{sec_est}, the  network $\mc{G}_\beta$ is considered as a cycle, 
while in $\mc{G}_\alpha$ agents $\alpha_2$ and $\alpha_1$ are two hubs of the network. Each agent adopts the proposed protocol \eqref{eq_p}-\eqref{eq_m} to estimate all $10$ system states (with partial observability via its measurement and neighboring information). The link weights in ${\mc{G}_\beta}$ (the nonzero $W_{ij}$s)  are chosen randomly such that $W$ is row-stochastic. The noise terms follow ${\nu = \mc{N}(0,0.01 \mb{1}_{nn})}$ and ${\zeta = \mc{N}(0,0.01 I_N)}$. The block-diagonal gain $K$ is determined via heuristic LMIs such that, for example: ${|\mb{c}^\top_{\beta} K_{\beta} \mb{c}_{\alpha_1}| = 0.008}$,  ${|\mb{c}^\top_{\gamma} K_{\gamma} \mb{c}_{\alpha_1}|= 0.00001}$,   ${|\mb{c}^\top_{\alpha_2} K_{\alpha_2} \mb{c}_{\alpha_1}|=0.005}$,   ${|\mb{c}^\top_{\alpha_1} K_{\alpha_1} \mb{c}_{\alpha_1}|=0.24}$, satisfying Lemma~\ref{lem_isolate} for any  ${0.011 \leq \epsilon < 1 }$ with $j$ as agent $\alpha_1$ in \eqref{eq_Kalpha}. Likewise, ${0.01 \leq \epsilon < 1 }$ for agent $\alpha_2$, implying that, for this given $K$, the attack-related portion of the residual at attacked agent $\alpha_2$ is almost $100$ times greater than the residuals at other (non-attacked) agents. Therefore, any attack at agents $\alpha_2,\alpha_1$ can be isolated. 
The parameters in Eq. \eqref{eq_theta1} are  ${a_1 = 2.937}$,  ${a_2 = 0.183}$,  ${b=0.682}$, 
which result in  ${\Theta_1 = 0.068}$ and  ${\Theta_2 = 0.078}$.
We consider fixed attack   ${\tau_{k\geq 30}=1}$ at agent  $\beta$ (following Assumption~(iv)) along with an auto-regressive non-stationary attack for $k\geq 20$  at agent $\alpha_2$ in the form ${\tau_{k+2}=2\tau_{k+1}-\tau_{k}+\vartheta}$ with ${\tau_{20}=\tau_{21}=0.3}$ and ${\vartheta \in [0,0.02]}$ as a uniform random variable. 
The residuals~\eqref{eq_r} (shown in Fig.~\ref{fig_sim1}-(Right)) at the attacked agents $\beta$ and $\alpha_2$ are biased, respectively, over ${\theta_{95.4\%}}$ and $\theta_{99.99\%}$, implying false alarm probabilities\footnote{The auto-regressive attack is given as an example of possible extension of the results to the case of non-stationary attacks, where the attack probabilities can be approximated by Lemma~\ref{lem_threshold}. } approximately less than $4.6\%$ and $0.01\%$. 

\textbf{Comparison with recent literature:} next, we use the estimation and detection strategy in \cite{he2019secure,he2020secure} for comparison. Recall that from Remark~\ref{rem_scale}, the distributed observer in  \cite{he2019secure,he2020secure} is a double time-scale protocol, which requires many iterations of consensus between every two time-steps of system dynamics. Therefore, it needs much faster information sharing/processing rate as compared to the proposed protocol \eqref{eq_p}-\eqref{eq_m}. \textit{The reason for choosing  \cite{he2019secure,he2020secure} for comparison study is that  double time-scale protocols make similar relaxed observability assumption as Assumption~(ii) in Section~\ref{sec_ass} (irrespective of system rank-deficiency). This is in contrast to many exisitng single time-scale protocols, e.g.,  \cite{usman_tsp:07,cattivelli2008diffusion,chen2016dynamic,chen2018resilient,battistelli_cdc,sayedtu12,nuno-suff.ness}, which assume that the underlying system is observable in the neighborhood of each agent and/or is full-rank. In other words,  the mentioned references generally  require more network connectivity,  and therefore, do not result in steady-state stable error over the given $\mc{G}_\alpha$ and $\mc{G}_\beta$ networks in Fig.~\ref{fig_sim1}-(Left).} We set the  parameters in \cite{he2019secure,he2020secure}
as in Table~\ref{tab_par} (which seem to provide the best outcome).
	\begin{table} [t] 
		\centering
		\caption{Parameter values for the detection and estimation protocol in \cite{he2019secure,he2020secure}.}
		\label{tab_par}
		\begin{tabular}{|c|c||c|c|c|c|} 
			\hline
			$L$& $40$ & $\alpha$ & $2$ &$\beta$ & $0.4$ \\
			\hline
			$\|A\|$ &  $1.35$ & $\gamma$ & $0.57$  & $N$& $4$ \\
			\hline
			$s$ & $2$  &$b_w$ & $0.06$ &$b_v$ & $0.06$ \\
			\hline
			$\lambda_0$ & $1.95$  &$\eta_0$ & $0.1$ &$\rho_{t_0}$ & $0.1$ \\		
			\hline
			\hline
		\end{tabular}
	\end{table}
In this simulation, agents need to perform ${L=40}$ consensus iterations for estimation/detection, which requires $40$-times faster communication and computation rate as compared to the proposed protocol \eqref{eq_p}-\eqref{eq_m}. The results are shown in Fig. \ref{fig_sim2}-(Left). 
\begin{figure}[t]
	\centering
	\includegraphics[width=1.6in]{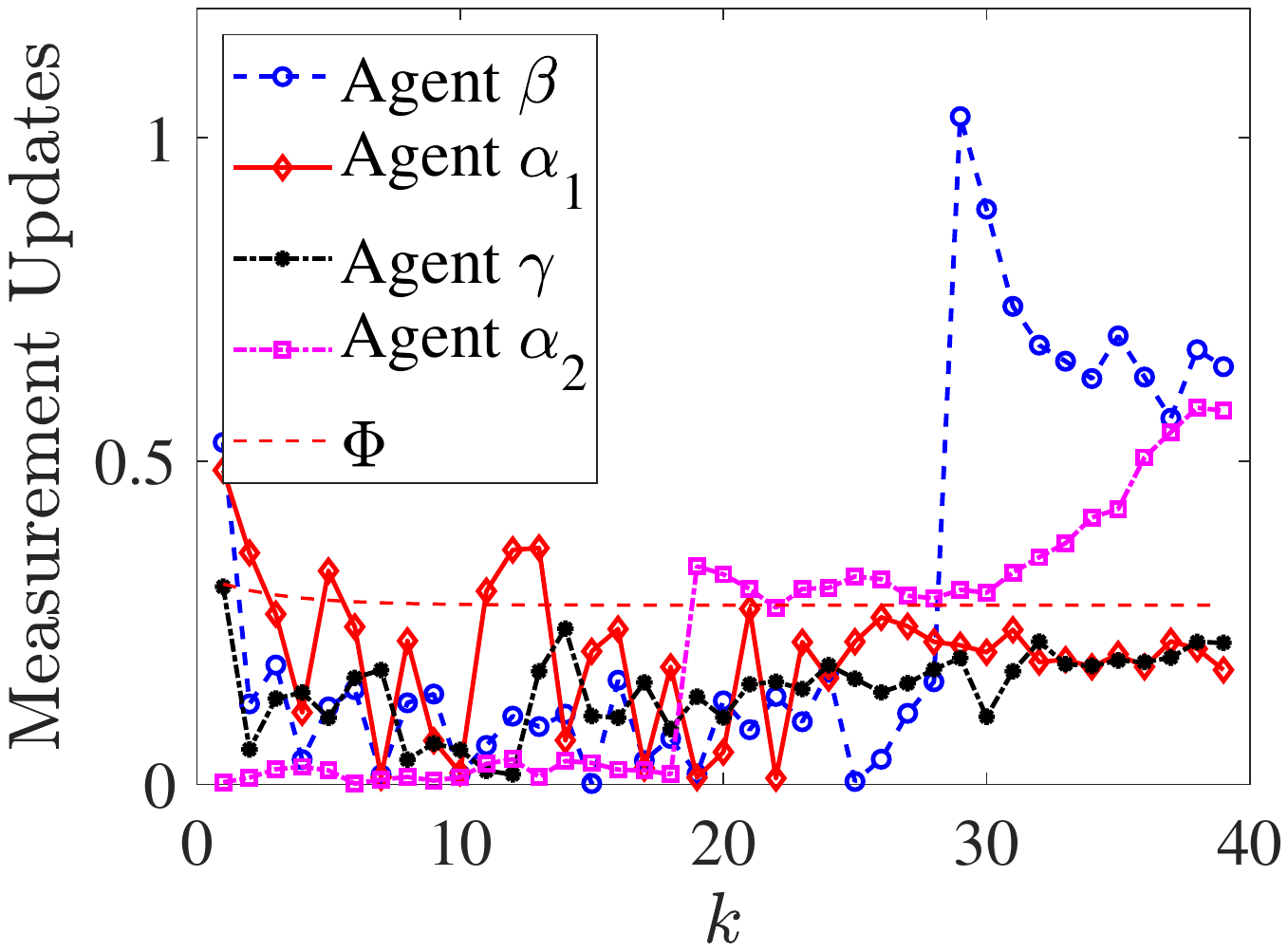}
	\includegraphics[width=1.7in]{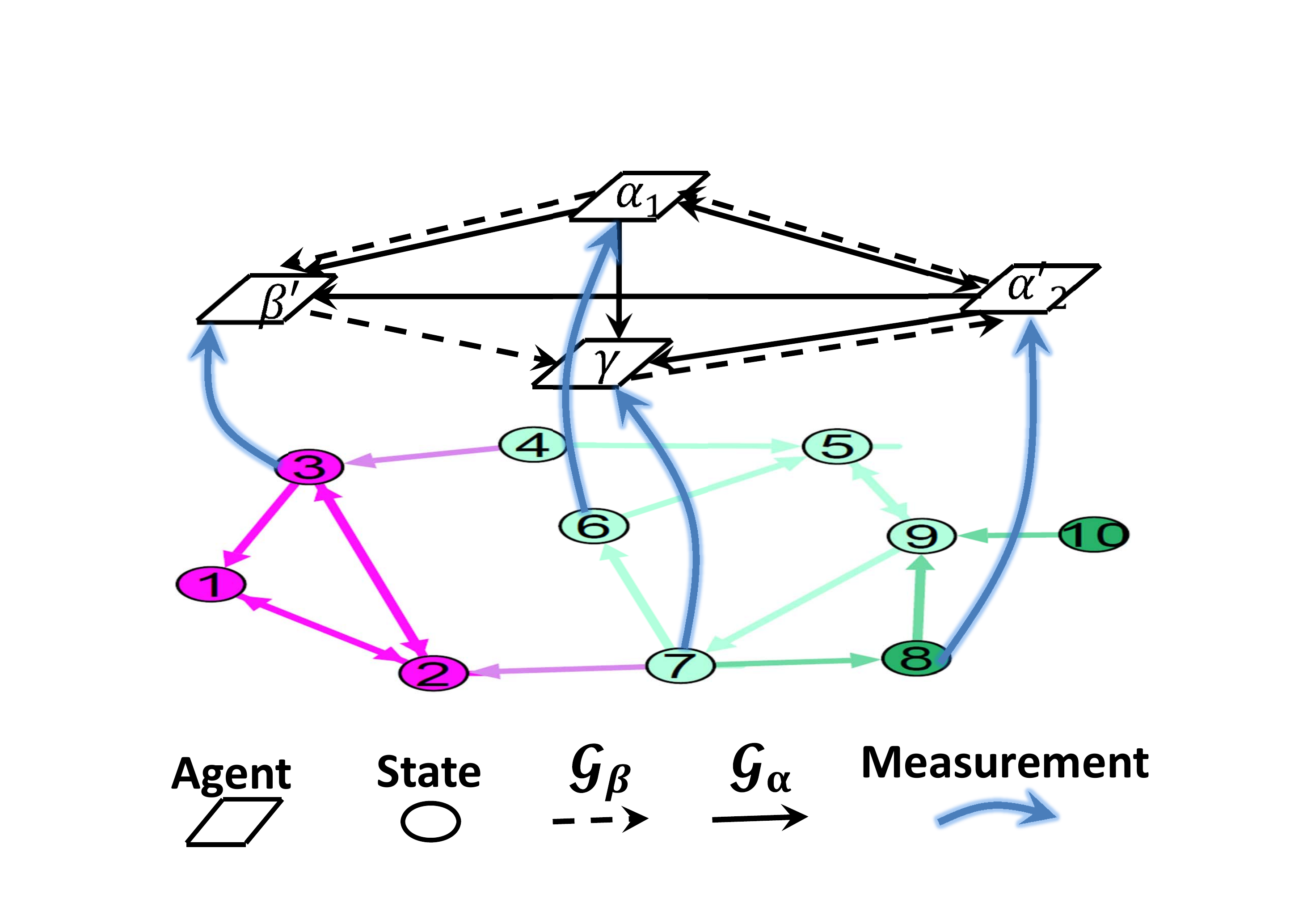}
	\caption{ (Left) This figure shows the measurement-updates at all agents based on the methodology in \cite{he2019secure,he2020secure}. The attack is detected via the threshold $\Phi$. Clearly, the protocol in \cite{he2019secure,he2020secure} with parameters given in Table~\ref{tab_par} detects both attacks at agents $\alpha_2$ and $\beta$, while also raising false alarm on agent $\alpha_1$ at some times. In contrast, our proposed detection strategy only raise alarm on the attacked agents as shown in Fig.~\ref{fig_sim1}-(Right). (Right) Using Algorithm~\ref{alg_1}, the detected attacks are mitigated by adding  equivalent agents $\alpha_2^{\prime}$ and $\beta^{\prime}$ to  recover the loss of observability. The new agents $\alpha_2^\prime$ and $\beta^\prime$ measure observationally-equivalent states, respectively, in the same contraction (green  nodes) and in the same parent SCC (purple  nodes). } 
	\label{fig_sim2} 
\end{figure}
Following the attack detection logic in \cite{he2019secure,he2020secure}, the agents can detect possible attacks if their \textit{measurement-updates} are over a certain threshold $\Phi$. From  Fig.~\ref{fig_sim2}-(Left), both attacks are detected, while also falsely alarming  attack at agent $\alpha_1$ at some times.

\textbf{Attack mitigation and performance analysis:} next, using the mitigation strategy in Algorithm~\ref{alg_1}, we replace the detected attacked agents $\beta$ and $\alpha_2$ with substitute agents $\beta^\prime$ and $\alpha_2^\prime$, respectively measuring observationally-equivalent state $3$ in $\mc{S}^p_1$ and state $8$ in $\mc{C}_2$. The connectivity of the new agents follows the same connectivity of $\mc{G}_\alpha$ and $\mc{G}_\beta$ as shown in Fig.~\ref{fig_sim2}-(Right).
\begin{figure}[t]
	\centering
	\includegraphics[width=1.7in]{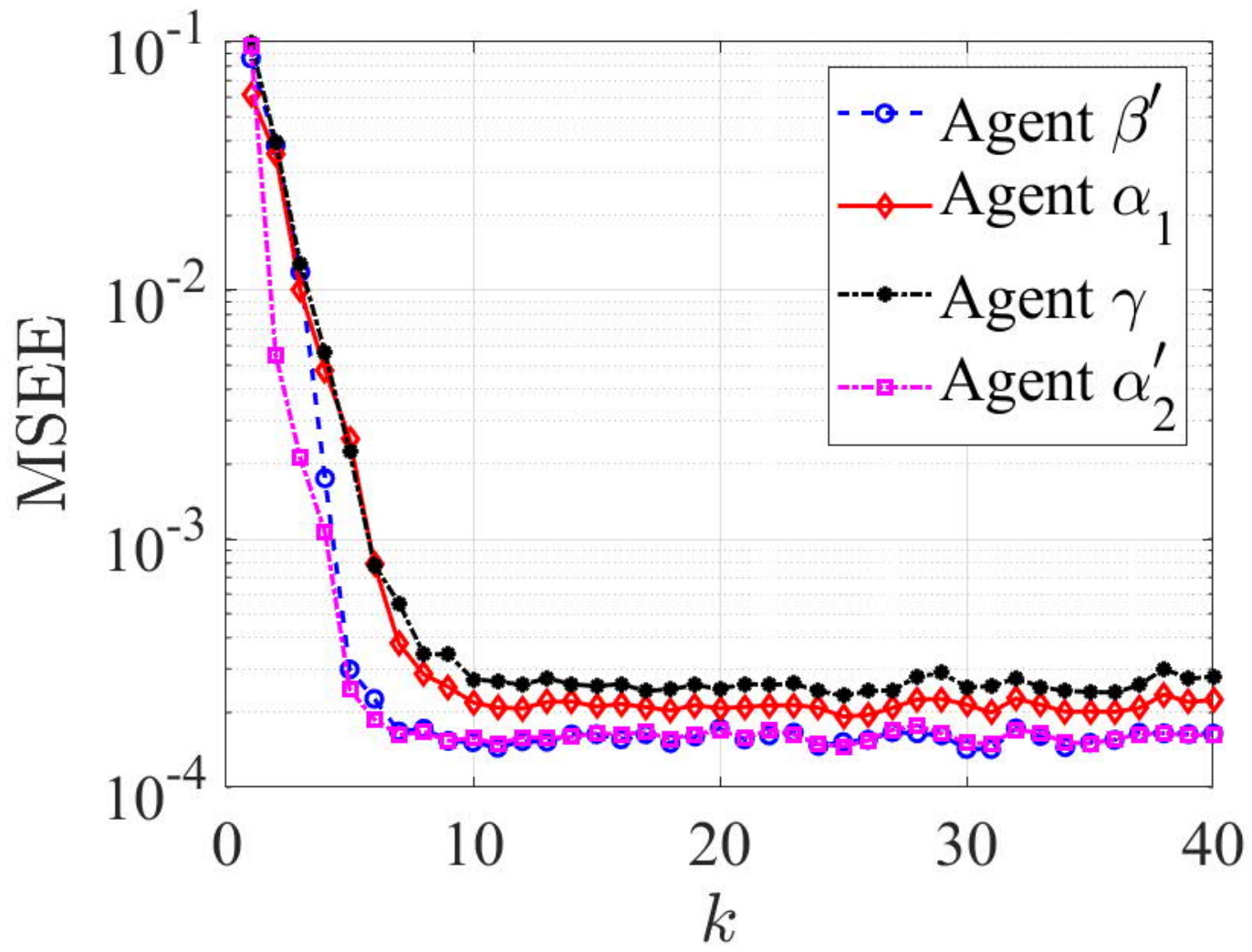}
	\includegraphics[width=1.7in] {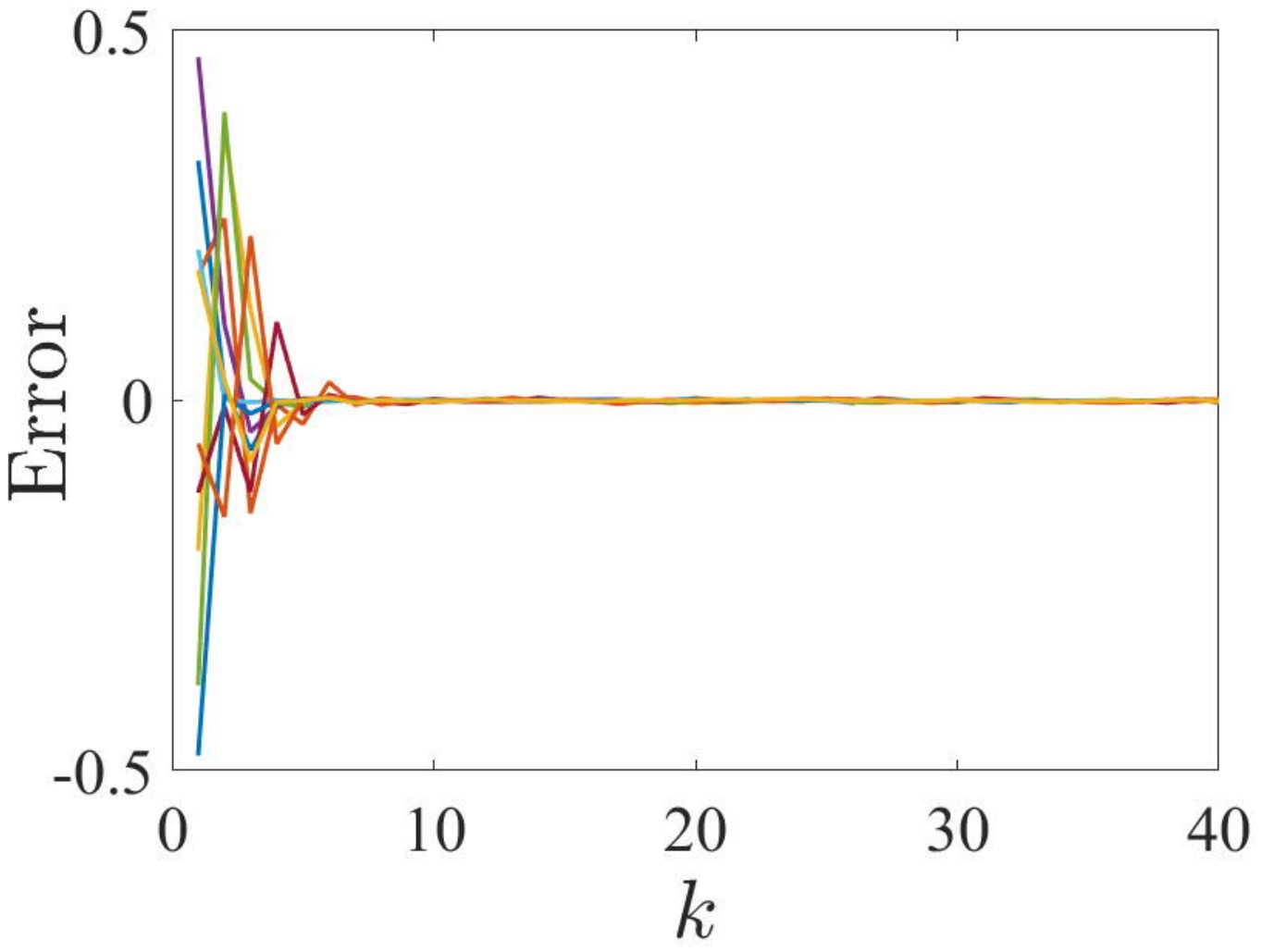}
	\caption{ (Left) This figure presents Monte-Carlo time-evolution of the MSEEs (in log-scale) at $4$ agents after attack mitigation. The bounded steady-state MSEEs imply observable estimation/filtering. (Right) This figure shows the  (Monte-Carlo) time-evolution of the estimation errors of all $10$ states at agent $\beta^\prime$, which are unbiased in steady-state.  }
	\label{fig_perform}
\end{figure}
We perform Monte-Carlo simulation (averaged over $100$ repetitions) of the proposed protocol \eqref{eq_p}-\eqref{eq_m} for the attack-mitigated case of Fig.~\ref{fig_sim2}-(Right).
The mean-square performance and mean performance are shown in Fig.~\ref{fig_perform}. As it is clear, the mean-square estimation errors (MSEEs) are bounded steady-state stable at all agents as expected from Lemma~\ref{lem_Ee2}. Further, from Lemma~\ref{lem_Ee}, the steady-state errors at all agents are unbiased; Fig.~\ref{fig_perform}-(Right) shows unbiased state errors at agent $\beta^\prime$ as an example.

\section{Conclusion}\label{sec_con}
This paper considers a decentralized attack detection  over distributed estimation networks. 
The detection, isolation, and mitigation strategy is designed specifically for $\alpha$, $\beta$, and $\gamma$-agents in polynomial-order complexity.
As future research direction, network reconfiguration \cite{csl2020,icas21_contraction}  to reduce attack vulnerability and design of attack-tolerant/resilient engineered networks is promising. Further, one can track the history of  residuals (for general rank-deficient systems) over a sliding time-window (known as \textit{stateful detection} \cite{giraldo2018survey}), similar to $\chi^2$-detection  \cite{icas21_attack} or trust-index evolution  \cite{khan2013secure}.

\bibliographystyle{IEEEbib}
\bibliography{bibliography}

\begin{IEEEbiography}[{\includegraphics[width=1.1in]{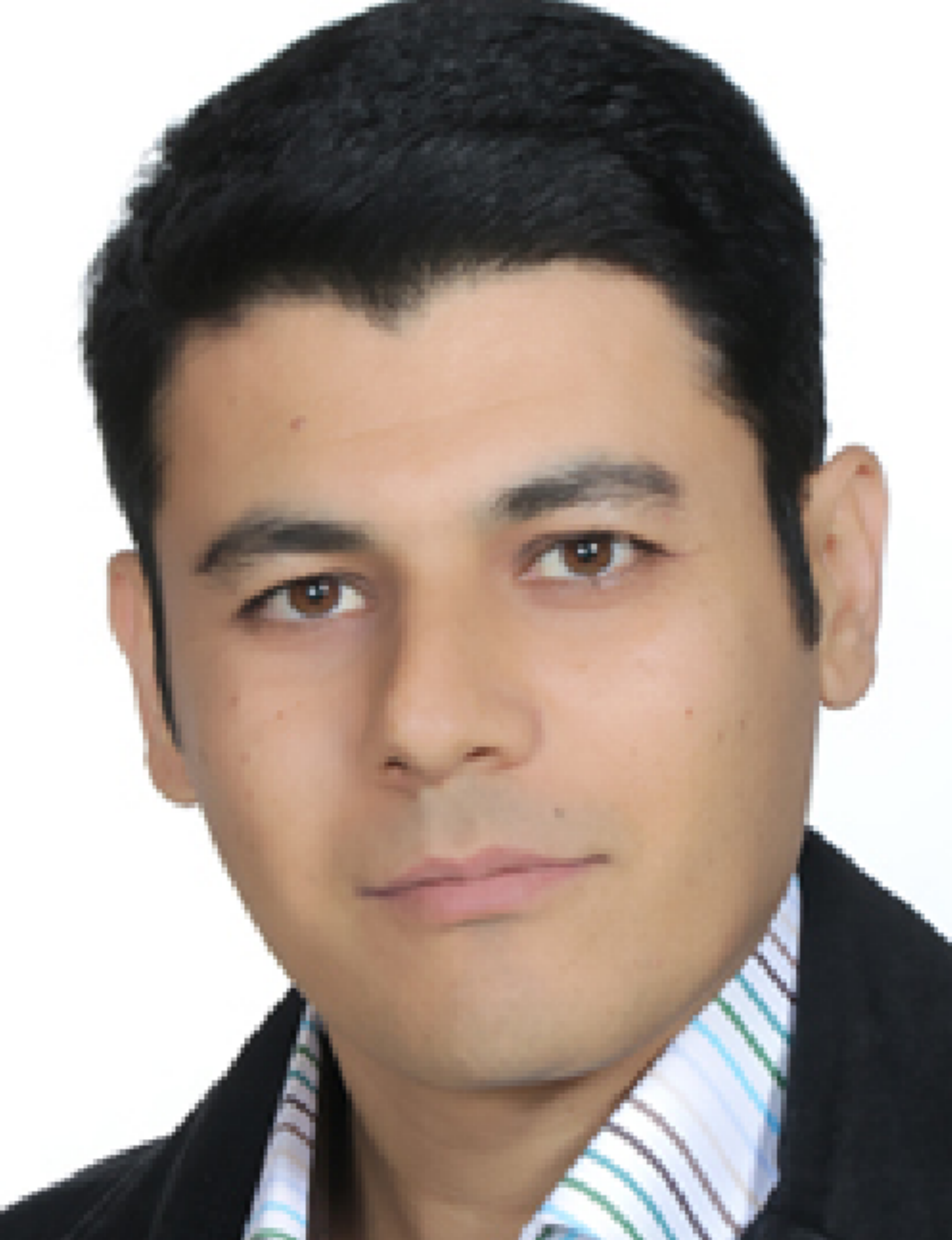}}]{Mohammadreza~Doostmohammadian}
	received his B.Sc. and M.Sc. in Mechanical Engineering from Sharif University of Technology (SUT), and Ph.D. in Electrical Engineering from Tufts University. He was a postdoc at AICT, School of Computer Engineering, SUT and a researcher at ITRC. Recognition of his work includes IEEE JSTSP journal cover and IEEE MSC09 and ICNSC14 conference awards.
	Currently, he is an Assistant Professor of Mechatronics at Semnan University and a   researcher with  Aalto University. His general research interest includes distributed optimization, control, and estimation over networks. 
	He was the chair of the robotics and control session at ISME-2018 conference.
\end{IEEEbiography}
\vspace{-0.6cm}
\begin{IEEEbiography}[{\includegraphics[width=1in]{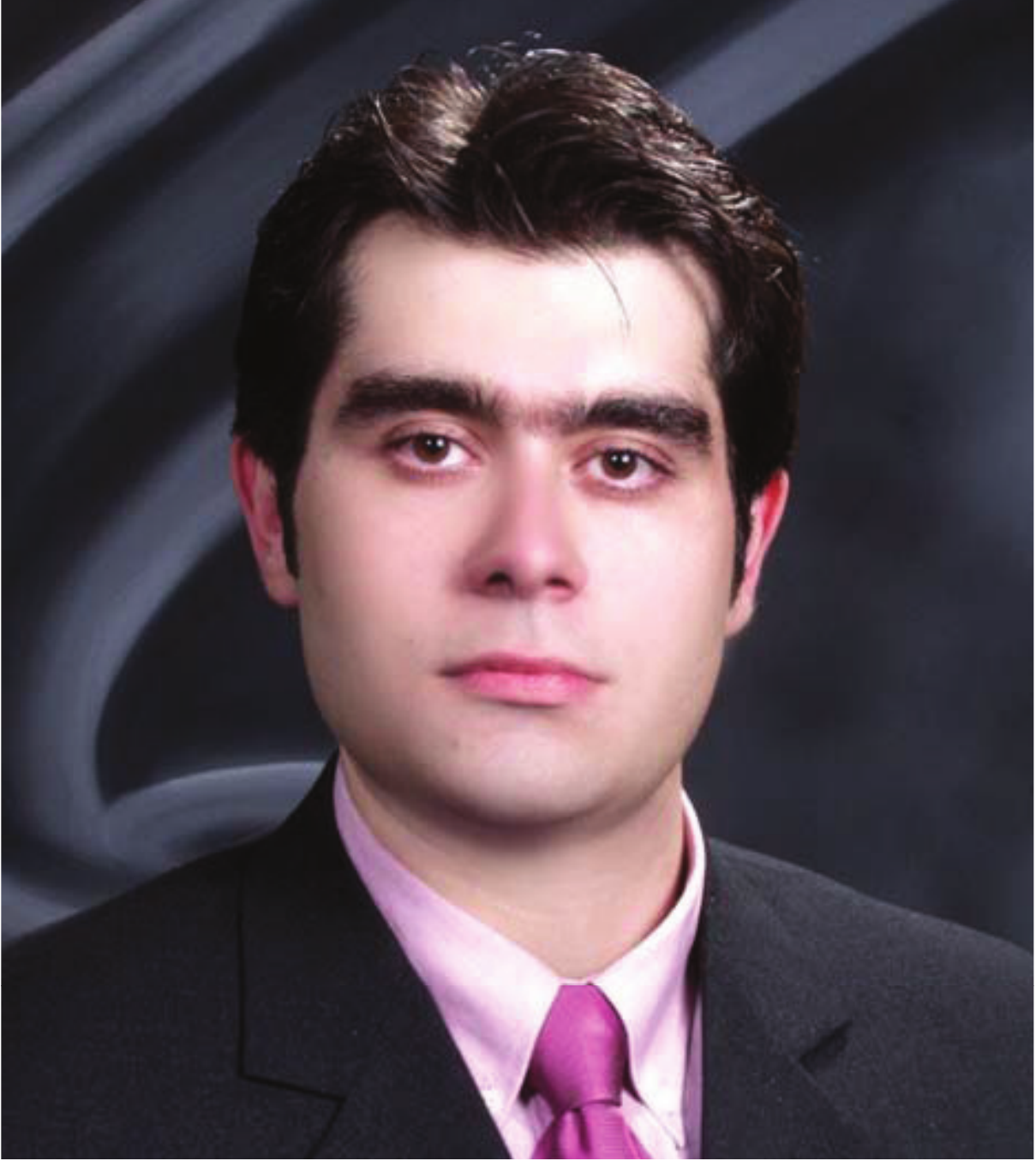}}]{Houman Zarrabi}
	received his Ph.D. from Concordia University in Montreal, Canada in 2011. Since then he has been involved in various industrial and research projects. His main expertise includes IoT, M2M, CPS, big data, embedded systems, and VLSI. He is currently the national IoT program director and assistant professor at Iran Telecommunication Research Center (ITRC).
\end{IEEEbiography}
\vspace{-0.6cm}
\begin{IEEEbiography}[{\includegraphics[width=1.1in]{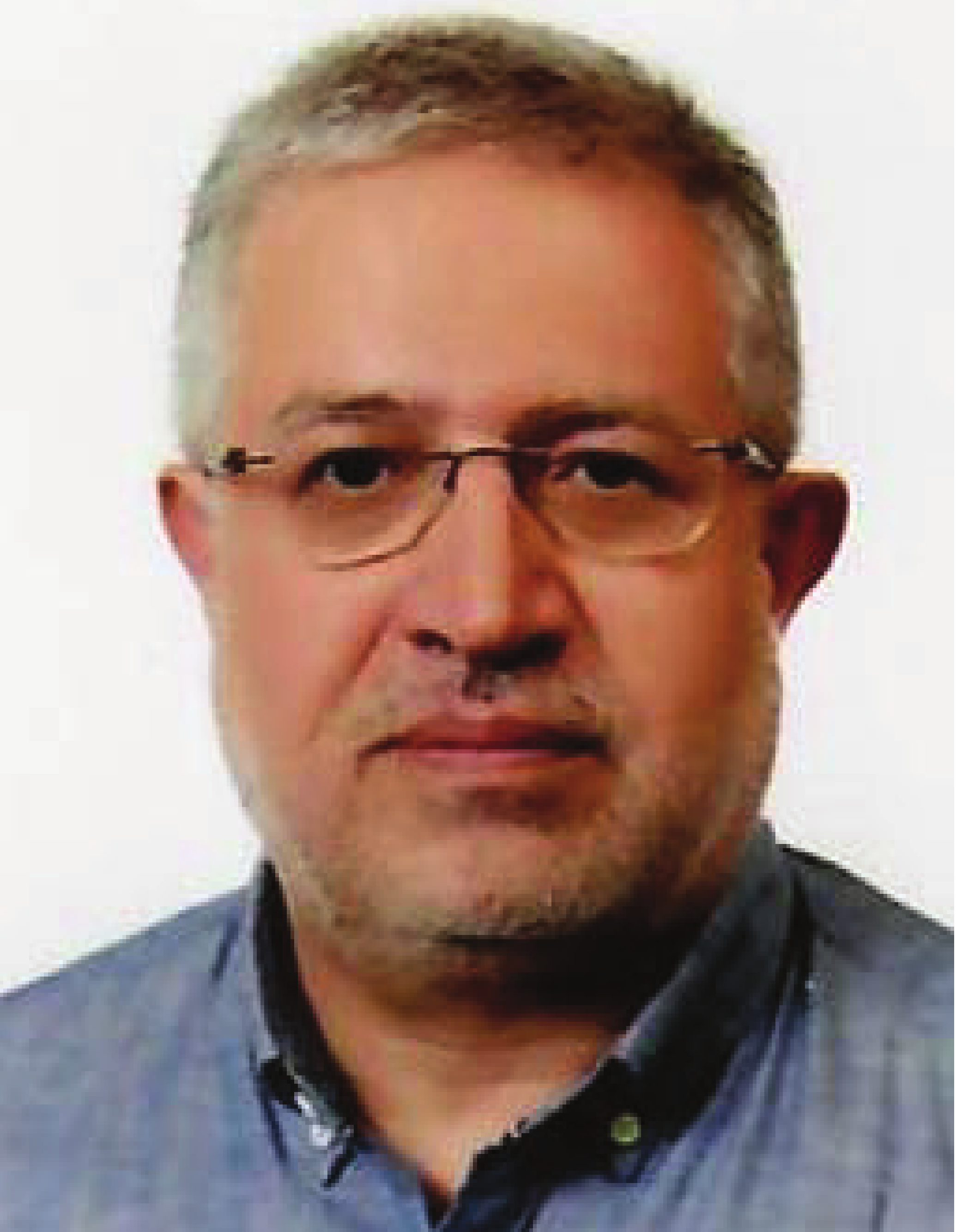}}]{Hamid R. Rabiee}
	received his BS and MS degrees in Electrical Engineering from CSULB, his EEE degree and his Ph.D. in Electrical and Computer Engineering from USC (1993), and from Purdue University (1996). He was with AT\&T Bell Laboratories, Intel Corporation as a Senior Software Engineer, and with PSU, OGI, and OSU as an adjunct professor. He was also a visiting professor at the Imperial College of London for the 2017-2018 academic year. He is the founder of AICT, SATI, DML, VASL, BCB, and Cognitive Neuroengineering Research Center. He is currently a Professor of Computer Engineering at SUT. 
\end{IEEEbiography}
\vspace{-0.6cm}
\begin{IEEEbiography}[{\includegraphics[width=1.1in]{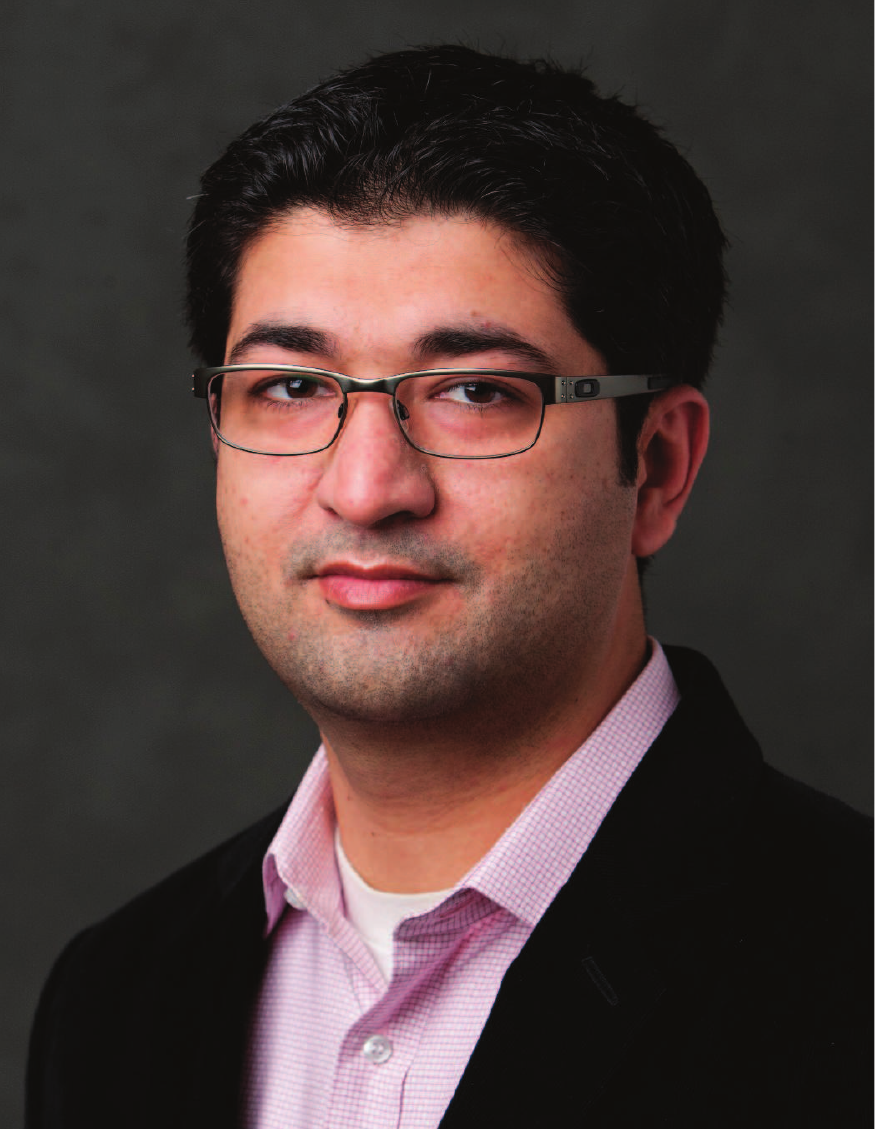}}]{Usman A. Khan} (Senior Member, IEEE) received the B.S. degree from the University of Engineering and Technology, Lahore, the M.S. degree from the
University of Wisconsin–Madison, and the Ph.D. degree from
Carnegie Mellon University, all in electrical and computer
engineering. He held a postdoc position at the GRASP Laboratory, UPenn.  He was a Visiting Professor with KTH, and currently is an
Associate Professor of electrical and computer engineering (ECE) with Tufts University,  where he is also an Adjunct
Professor of computer science. 
Recognition of his work includes the prestigious NSF Career Award, several NSF REU awards, an IEEE journal cover, three IEEE best student paper awards.

\end{IEEEbiography}
\vspace{-0.6cm}
\begin{IEEEbiography}[{\includegraphics[width=1.1in]{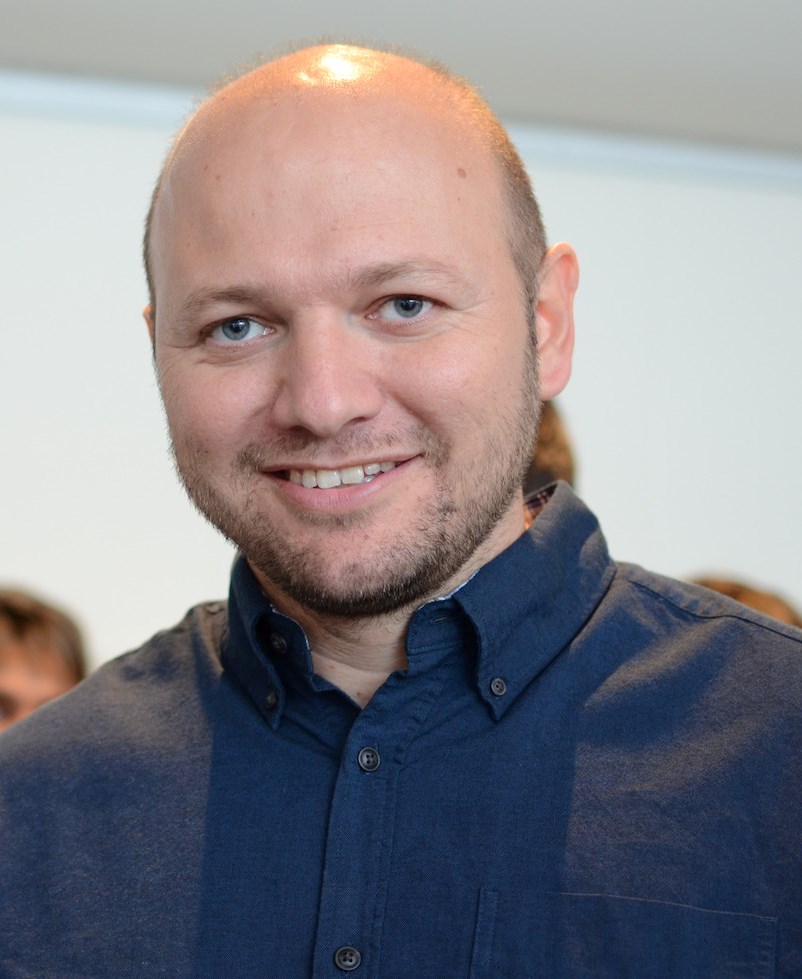}}]{Themistoklis Charalambous}
received his BA and M.Eng in Electrical and Information Sciences from Trinity College, Cambridge University. He completed his Ph.D. studies in the Control Laboratory, of the Engineering Department, Cambridge University. He joined the Human Robotics Group as a Research Associate at Imperial College London for an academic year and worked as a Visiting Lecturer at the Department of Electrical and Computer Engineering, University of Cyprus. He worked as a postdoc at the 
Department of Automatic Control of the School of Electrical Engineering at KTH and Department of Electrical Engineering at Chalmers University of Technology. 
Since September 2018 he was nominated Research Fellow of the Academy of Finland and since July 2020 he is a tenured Associate Professor of Electrical Engineering at Aalto University.
\end{IEEEbiography}

\end{document}